\documentclass[12pt]{article}\usepackage[]{graphicx}\usepackage[]{xcolor}
\makeatletter
\def\maxwidth{ %
  \ifdim\Gin@nat@width>\linewidth
    \linewidth
  \else
    \Gin@nat@width
  \fi
}
\makeatother

\definecolor{fgcolor}{rgb}{0.345, 0.345, 0.345}

\usepackage{framed}
\makeatletter
 {\par\unskip\endMakeFramed%
 \at@end@of@kframe}
\makeatother

\definecolor{shadecolor}{rgb}{.97, .97, .97}
\definecolor{messagecolor}{rgb}{0, 0, 0}
\definecolor{warningcolor}{rgb}{1, 0, 1}
\definecolor{errorcolor}{rgb}{1, 0, 0}
\newenvironment{knitrout}{}{} % an empty environment to be redefined in TeX

\usepackage{alltt}
\usepackage{amsmath,amssymb,amsthm,graphicx,natbib,geometry,hyperref}
\usepackage{authblk}

\newcommand{\Exp}[2][]{{\rm E}_{#1}[ #2 ]  }
\newcommand{\Var}[1]{{\rm Var}[ \ensuremath{ #1 } ]  }
\newcommand{\Cov}[1]{{\rm Cov}[ \ensuremath{ #1 } ]  }

\newcommand{\bl}[1]{{\mathbf #1}}
\newcommand{\bs}[1]{\boldsymbol #1}
\newcommand{\tr}{\text{tr}}

\newcommand{\ubar}[1]{\underbar{$#1$}} 
\newcommand{\minrank}{\hat r} 
\newcommand{\maxrank}{\check r} 
\newcommand{\xrank}{r}

\newtheorem{theorem}{Theorem}
\newtheorem{corollary}{Corollary}
\newtheorem{definition}{Definition}
\newtheorem{lemma}{Lemma}

\IfFileExists{upquote.sty}{\usepackage{upquote}}{}
\begin{document}

\title{Extended rank regression for all ordinal data} 

\author{Peter Hoff and Supratik Basu} 
\affil{Department of Statistical Science, Duke University} 

\maketitle
\begin{abstract} 
The accuracy of inference from a regression model depends largely 
on how well the model represents the relationship between the mean and variance 
of the outcomes. 
As this relationship is rarely of direct interest, 
it is natural to treat 
it as a nuisance parameter, rather than attempt to estimate it. 
We take this approach in the context of 
 a monotonically transformed linear regression model using  a pseudo-likelihood based on 
an extended notion of ranks. 
This approach can accommodate a wide range of mean-variance relationships and 
 any ordinal data type, including continuous and discrete ordered data, and
 requires no estimation or prior specification of the transformation, or decision to treat 
an outcome as continuous or discrete. 
 We show that the extended rank likelihood incurs no asymptotic information loss
at the two extremes of continuous  and binary data, and that
rank-based prediction intervals can obtain approximate coverage control 
conditional on the features. 
Bayesian parameter estimates and prediction intervals are available via 
a simple Gibbs sampling  algorithm. For settings where the model is in doubt, 
conformal calibration of the Bayesian predictive distribution
provides intervals  with guaranteed marginal frequentist coverage.

\smallskip
\noindent \textit{Keywords:} 
Box--Cox transformation, conditional coverage, conformal prediction, marginal likelihood, ordinal data, posterior prediction, rank likelihood, semiparametric regression, variance-stabilizing transformation. 

\end{abstract}

\section{Introduction}
Primary uses of regression models include inferring 
the relationship between outcome variables $Y_1,\ldots, Y_n$ 
and their corresponding 
feature vectors $\bl x_1,\ldots, \bl x_n \subset  \mathbb R^p$, and
prediction of a new outcome $Y$ given a new feature vector $\bl x$. 
Typically, 
it is assumed that 
$Y_1,\ldots, Y_{n}$ are conditionally independent given 
$\bl x_1,\ldots, \bl x_{n}$, and that the conditional distribution of 
$Y_i$ given $\bl x_1,\ldots, \bl x_{n}$ depends only 
on $\bl x_i$. The most widely-used such model is the normal linear regression model, 
which posits that $Y_i \sim N( \bl x_i^\top \bs\beta, \sigma^2)$ independently for $i=1,\ldots, n$. 
A critical yet often violated assumption of this model is that the variance 
is the same for all observations. If this is untrue, then the ordinary least squares estimate 
of $\bs\beta$ remains unbiased, but is no longer variance-optimal, as it would generally 
have a larger variance matrix (in Loewner order) than an appropriately-weighted 
least squares estimate. 
Additionally, confidence intervals for $\bs\beta$ and prediction intervals 
for new outcomes can be misleading if constructed using an inappropriate assumption of 
constant variance.

\begin{figure} 
\begin{knitrout}
\definecolor{shadecolor}{rgb}{0.969, 0.969, 0.969}\color{fgcolor}
\includegraphics[width=\maxwidth]{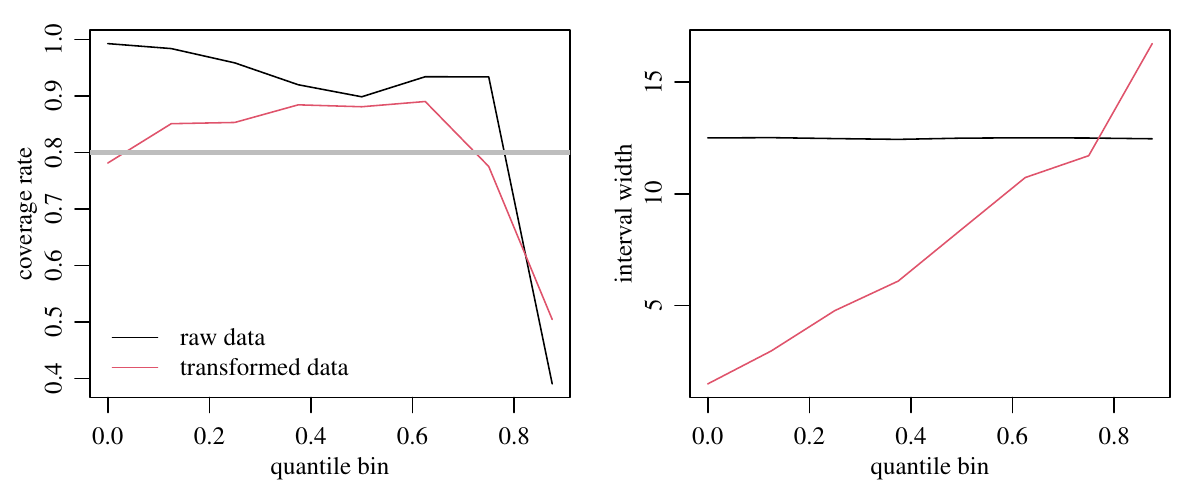} 
\end{knitrout}
\caption{Seattle rain data. The left panel plots conditional coverage rates of 
nominal 80\% 
prediction intervals for daily rainfall,  
and the right panel plots their widths. }
\label{fig:sr0} 
\end{figure}

To illustrate this phenomenon, consider forecasting daily rainfall $Y_{t+1}$ 
in Seattle from one-day lagged weather 
features $\bl x_{t+1}$, over a ten-year period from 2016--2025 (these data are more fully described in Section 5). From a linear model fit to observed data $(y_1,\bl x_1),\ldots, (y_t, \bl x_{t} )$, 
an 80\% prediction interval can be constructed for each $Y_{t+1}$ such that, if the normal 
linear model is accurate, the coverage probability of the interval will be 
80\%, conditionally on each $\bl x_{t+1}$. However, as shown in the left panel of 
Figure \ref{fig:sr0}, this coverage rate is not maintained empirically: 
While the marginal coverage rate across all predicted outcomes is 88\%, 
the rate is above 90\% 
for outcomes below the 87.5 percentile of rainfall values and below 40\% for the 228 
outcomes above the 87.5 percentile. The variable coverage rate is due primarily to the fact that 
the interval widths are largely determined by a shared estimate of $\sigma^2$, 
whereas the data strongly indicate that the variance is increasing with the mean. 
An estimate of this mean-variance relationship suggests that fitting a linear model 
to  $(y_1^{1/4},\bl x_1),\ldots, (y_t^{1/4}, \bl x_{t} )$, 
 would be more appropriate \citep{box_cox_1964}. 
As shown in the left panel of the figure, constructing prediction intervals from the linear model fit 
on this quarter-power scale, then transforming back to the original scale, 
yields intervals whose coverage rate is somewhat closer to the target rate across different levels 
of rainfall. This improved conditional coverage is partly because  
the estimated variance on the original data scale
is now increasing as a function of the mean, resulting in interval widths that scale more appropriately 
with the magnitude of the outcome, as shown in the right panel of the figure. 

Such power transformations can help stabilize the variance
of many types of positive data,
leading to more accurate inference. 
More generally, one may consider the generalized Box--Cox transformation model, 
which posits that each $Y_i$ is equal to $G(Z_i)$ where $Z_i \sim N(\bl x_i^\top \bs\beta, 1)$ 
and  $G$ is an unknown increasing function.  
This function partly determines the 
 mean-variance relationship of the outcome variables, which depending on $G$ could be 
constant, increasing, decreasing and 
even non-monotonic. 
One approach to inference for such models is to jointly estimate $G$ along with $\bs\beta$, 
either parametrically \citep{box_cox_1964,bickel_doksum_1981}, or nonparametrically 
\citep{horowitz_1996,ye_duan_1997,
chen_2002,
hothorn_kneib_buhlmann_2014}. These works considered the case of  
strictly increasing  $G$ and hence continuous outcomes. 
For discrete ordinal outcomes  the standard approach is to use ordered probit models 
\citep{mckelvey_zavoina_1975,mccullagh_1980,albert_chib_1993,kowal_wu_2025} that parameterize 
$G$ as an increasing step function with discontinuities at each point in the sample space of 
$Y_1,\ldots, Y_n$. Such models are primarily used for ordered categorical 
data, where the sample space consists of a finite set of ordered categories that are known 
in advance, and may not correspond to specific numerical outcomes. 
While these models could  in principle be applied to numerical outcomes such as the Seattle rain data, 
doing so can be somewhat awkward and inefficient, as a large fraction of the ``categories'' in such 
cases may have zero  observations. 
For example, the Seattle rain data is measured to the nearest 1/100th millimeter, and would thus require 
a parameter space for $G$ of dimension more than 5300 just to accommodate the range of smallest (0.00) to largest (53.11) 
rainfall values, even though only 1097 unique values appear in the dataset. 

Although the transformation $G$ plays a role in determining the conditional distributions of the outcomes, 
 it is often not itself of direct interest. This point was raised in different ways 
by \citet{box_cox_1964,box_cox_1982} and \citet{bickel_doksum_1981}, the latter also highlighting 
statistical challenges in joint estimation of $\bs\beta$ and $G$. 
This suggests the use of statistical methods that treat $G$ as a nuisance parameter.
In the case of strictly increasing $G$, 
\citet{pettitt_1982} suggested estimating $\bs\beta$ by approximately maximizing 
the rank likelihood, 
a type of  marginal likelihood that depends on the data only though the ranks, and hence is a function only 
of $\bs\beta$. 
Asymptotic properties of this maximum rank likelihood estimator  have been 
studied in the case of strictly increasing $G$ by \citet{bickel_1987} and \citet{bickel_ritov_1997}, with the 
latter showing that for semiparametric estimation of $\bs\beta$
there is no asymptotic information loss in reducing the data to the ranks. 

While the different types of monotonic transformation models mentioned above 
are widely used, the available methodology 
for their implementation tends to be tailored to specific cases --- such as positive, 
continuous or discrete data --- and parametric or nonparametric specifications of $G$. 
Rather than selecting from this variety of implementations, a data analyst might find it
useful to have 
a single generic methodology that can be applied to all ordinal data types, 
particularly those where it is unclear whether to treat the outcome as discrete or continuous. 
To this end, in this article we review and develop inference and prediction methods for generic ordinal data, using a monotonically transformed linear model (MTLM) as described above where the only assumption 
on $G$ is that it is non-decreasing. This is accomplished primarily via Bayesian computational 
tools applied to 
a pseudo-likelihood based on an extended notion of ranks, 
first used by \citet{pettitt_1984} in an application to ordinal categorical data. 
In the next section, we review this extended rank likelihood (ERL) and describe how Bayesian inference 
for the regression coefficients $\bs\beta$ may be obtained from a very simple Gibbs sampler. 
In Section 3, we study the efficiency and asymptotic properties of rank-based inference. 
We complement existing 
results of \citet{bickel_ritov_1997} for the case of continuous outcomes (strictly increasing $G$) 
with a new result for binary outcomes that indicates that there is no asymptotic efficiency loss 
in estimating $\bs\beta$ via the ERL, as compared to using a full likelihood that requires specification of $G$. 
Section 4 covers rank-based prediction: First 
we show how a prediction interval constructed from a consistent rank-based estimate can achieve
a conditional coverage rate that asymptotically  matches its nominal level. 
We then develop two practical methods of rank-based prediction interval construction. 
First, we show how a posterior predictive distribution for the extended rank of $Y$ among 
$Y_1,\ldots, Y_n$ may be obtained from the ERL. This predictive distribution may be 
combined with the observed values of $Y_1,\ldots, Y_n$ to yield 
a posterior prediction interval for the value of $Y$. For scenarios where modeling 
assumptions are in doubt, we also provide a conformal prediction procedure based 
on the posterior predictive distribution of the extended rank of $Y$, which 
can guarantee a target marginal coverage rate even if the model is misspecified. 
Two example data analyses are presented in Section 5, and a discussion follows in Section 6. 
Mathematical proofs are in the appendix. 

\smallskip

Replication code for all numerical examples in this article, as well as an open-source {\sf R}-package {\tt perle}, are available at the first author's website.

\section{Rank-based inference for the MTLM} 

\subsection{Rank and extended rank likelihood}
The monotonically transformed linear model (MTLM) 
for a vector of scalar outcomes $\bs Y= (Y_1,\ldots, Y_n)\in \mathbb R^n$  and a given design matrix $\bl X \in \mathbb R^{n\times p}$ 
specifies that 
there exist a $\bs \beta \in \mathbb R^p$ and a non-decreasing function $G:\mathbb R \rightarrow \mathbb R$ 
such that 
\begin{align}  
 (Z_1,\ldots, Z_n) = \bs Z & \sim  N_n(\bl X\bs \beta,\bl I_n)  \label{eqn:lmod} \\
Y_i & = G(Z_i ), \label{eqn:tmod}
\end{align}  
with both $\bs\beta$ and $G$ being unknown. 
Note that a model that specified the variance of $\bs Z$ as $\sigma^2  \bl I_n$ 
for some unknown $\sigma^2>0$ would be non-identifiable, as would a model where $\bl X$ included an intercept term. 

If $\bs Z$ were observed then estimation and inference for 
$\bs \beta$ could proceed without regard for the unknown and 
possibly infinite-dimensional 
parameter $G$. 
 \citet{pettitt_1982} recognized that while $\bs Z$ is not observed, 
some information about $\bs Z$ is available from the observed 
data vector $\bs Y$ 
that does not 
rely on knowledge of $G$ other than its monotonicity. In particular, 
observation of $Y_i < Y_{i'}$ implies 
$Z_i< Z_{i'}$. 
If $G$ is strictly increasing then 
there are no ties among the  
elements of $\bs Y$ and 
the rank ordering of $\bs Z$ is the same as that of
 $\bs Y$. 
For this case, \citet{pettitt_1982} suggested making inference for $\bs \beta$ 
using the \emph{rank likelihood} $L( \bs \beta : \bl r)$, defined as 
$L(\bs\beta : \bl r ) = \Pr( r(\bs Y) = \bl r | \bs\beta)  =
 \Pr( r(\bs Z) = \bl r | \bs\beta)$ where $\bl r = r(\bl y)$ are the 
ranks of the observed data vector $\bl y$. In particular, he proposed estimating $\bs\beta$ 
with the value $\hat{\bs\beta}$ that maximizes an integral approximation to $L(\bs\beta:\bl r)$. 

The rank likelihood was extended by \citet{pettitt_1984} 
to accommodate 
ordinal categorical data. In the context of multivariate 
copula estimation, \citet{hoff_2007a} pointed out how a multivariate version 
of this extension
can be applied to the general setting 
where $G$ is non-decreasing but not necessarily strictly increasing. 
In this case, a range of $z$-values could  map to the same $y$-value, resulting 
in a distribution with atoms. As such, allowing 
$G$ to be non-decreasing  instead of strictly increasing results in a model that accommodates 
data types that are continuous, discrete, or some combination of these. 
In the context of the MTLM, this extended rank likelihood is 
defined by 
\begin{equation}
 L( \bs\beta: S(\bl y) ) = \Pr( \bs Z \in S(\bl y) | \bs\beta ) \label{eqn:erl}, 
\end{equation} 
where 
$S(\bl y)$ is the convex set of possible $\bs Z$-values implied by the monotonicity of $G$: 
\begin{equation}
S(\bl y) = \{ \bl z \in \mathbb R^n :  
     \max\{ z_{i'} : y_{i'} < y_i \} <    z_i < 
     \min\{ z_{i'} : y_{i} < y_{i'} \}  \}. 
\end{equation} 
Note that in the absence of ties, $L(\bs\beta: S(\bl y) ) = L(\bs\beta: \bl r)$. 
Whether or not there are ties, $L(\bs\beta: S(\bl y) )$ 
does not depend on the unknown value of $G$, suggesting its use 
as a marginal likelihood for semiparametric inference for $\bs\beta$. 

In the case of non-atomic data the ranks have been well-studied as a statistic, 
that is, a known function of the data. 
For data that possibly include ties, the event 
$\bs Z \in  S(\bl y)$ can be represented in terms of a generalization of the rank statistic, which we refer to as 
the \emph{extended ranks}:

\begin{definition}
For  $\bl y\in R^m$, define the  \emph{minimum ranks} $\minrank(\bl y)$  and the 
\emph{maximum ranks} $\maxrank(\bl y)$ 
as $m$-dimensional vectors of integers with elements $(\minrank(\bl y)_1,\ldots,\minrank(\bl y)_m )$ 
and $(\maxrank(\bl y)_1,\ldots,\maxrank(\bl y)_m )$ 
given by 
\begin{align} 
\minrank( \bl y)_i &  = 1+ \sum_{i'=1}^m 1\times (y_{i'} < y_i )  \\
\maxrank( \bl y)_i &  =\sum_{i'=1}^m 1\times (y_{i'} \leq y_i ). 
\end{align} 
Define the \emph{extended ranks} of $\bl y$ as an $m$-tuple of sequences of integers, with the $i$th sequence given by
\begin{equation}
 \xrank( \bl y )_i = \{ j\in \mathbb N:   \minrank( \bl y)_i  \leq j \leq \maxrank( \bl y)_i \}. 
\end{equation}
\end{definition}
Note that $\xrank(\bl y)_i = \xrank(\bl y)_{i'}$ if and only if $y_i = y_{i'}$, and that the extended ranks 
may be ordered 
 so that
 $\xrank(\bl y)_i < \xrank(\bl y)_{i'}$ if $y_{i}< y_{i'}$.
Additionally, if there are no ties among the elements of a vector $\bl y$, then the minimum and maximum ranks 
are equal to each other and to the usual rank,  
and so in this case $r(\bl y)$ reduces to the usual definition of the ranks of $\bl y$. 

A connection between the extended ranks and monotone transformation models can be made as follows:
\begin{lemma} 
\label{lem:xrankG} 
Let $\bl z\in \mathbb R^n$ be a vector with no ties, and $y_i = G(z_i)$ where $G$ is non-decreasing. 
Then for each $i=1,\ldots, n$,
$r(\bl z )_i \in  \xrank( \bl y)_i$. 
\end{lemma}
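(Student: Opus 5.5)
Since $\bl z$ has no ties, its usual rank is $r(\bl z)_i = 1+\sum_{i'} 1(z_{i'}<z_i)$, which also equals $\sum_{i'} 1(z_{i'}\le z_i)$. By the definition of the extended ranks, it therefore suffices to prove the two inequalities
\[
\minrank(\bl y)_i \le r(\bl z)_i \le \maxrank(\bl y)_i .
\]
The plan is to compare the indicator sums defining the three quantities term by term. Monotonicity of $G$ will give containments between the corresponding index sets.

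For the lower bound, I would show that $\{i' : y_{i'}<y_i\} \subseteq \{i' : z_{i'}<z_i\}$. Take $i'$ with $G(z_{i'})<G(z_i)$, and suppose $z_{i'}\ge z_i$. Since $G$ is non-decreasing, this would give $G(z_{i'})\ge G(z_i)$, a contradiction. Hence $z_{i'}<z_i$. Summing indicators and adding one gives
\[
\minrank(\bl y)_i = 1+\sum_{i'}1(y_{i'}<y_i) \le 1+\sum_{i'}1(z_{i'}<z_i) = r(\bl z)_i .
\]

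For the upper bound, I would show that $\{i' : z_{i'}\le z_i\}\subseteq\{i' : y_{i'}\le y_i\}$. This is immediate: $z_{i'}\le z_i$ implies $G(z_{i'})\le G(z_i)$. Summing indicators gives
\[
r(\bl z)_i = \sum_{i'}1(z_{i'}\le z_i) \le \sum_{i'}1(y_{i'}\le y_i) = \maxrank(\bl y)_i .
\]
Combining the two bounds, $r(\bl z)_i$ is an integer in $[\minrank(\bl y)_i, \maxrank(\bl y)_i]$, so $r(\bl z)_i\in \xrank(\bl y)_i$.

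I do not expect a real obstacle here. The only point needing care is where the no-ties assumption on $\bl z$ enters: it is what makes the two expressions for $r(\bl z)_i$ agree, so that a single integer is sandwiched between the minimum and maximum ranks. The contrapositive step in the lower bound is the only place where monotonicity is used in non-direct form.
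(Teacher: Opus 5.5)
Your proof is correct and is essentially the paper's argument: the paper notes that $\xrank(\bl y)_i=\{l,\ldots,u\}$ has exactly $l-1$ elements of $\bl y$ strictly below $y_i$ and $n-u$ strictly above, and monotonicity of $G$ then forces $l\le r(\bl z)_i\le u$. Your two index-set containments are the same counting step written out explicitly, with the upper bound stated in complementary form.
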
 
In other words, the extended ranks of $\bl y$ determine the possible ranks of $\bl z$ implied by the monotonicity of $G$. 
This lemma will be used  in Section 4 where we show how 
a prediction region for the rank of a new $z$-value 
implies a prediction region for 
the extended ranks of the corresponding $y$-value. 
The lemma also provides a representation of the extended rank likelihood in terms of the 
extended ranks of $\bl y$: 
\begin{align*}
  L(\bs\beta : S(\bl y) ) & \equiv \Pr( \bs Z \in S(\bl y) | \bs\beta ) \\ 
 & = \Pr( r(\bs Z )_i \in  \xrank( \bl y)_i, i=1,\ldots ,n  | \bs\beta ). 
\end{align*}
We note that this likelihood is not equivalent to the probability of observing 
$r(\bs Y) =r(\bl y)$ unless $G$ is strictly increasing, because for discrete data 
the distribution of ties 
among the elements of $\bs Y$ will generally depend on $G$. 
See \citet{hoff_2007a} for a discussion of this point in the context of 
rank-based copula estimation.

\subsection{Posterior approximation} 
In principle, estimation and inference for $\bs\beta$ in the non-atomic case could proceed via maximization and differentiation of 
 $L(\bs\beta: \bl r)$. To this end, 
\citet{pettitt_1982} provided 
a deterministic integral approximation to $L(\bs \beta:\bl r)$ whose validity 
depends on $\bs\beta$ being close to zero. For other situations, 
\citet{doksum_1987} and \citet{pettitt_1987} provided Monte Carlo 
approximation methods, mostly based on iteratively reweighted least-squares approximations where the moments at each iteration are obtained via Monte Carlo approximation. The latter article compared parameter estimates 
given by these approximations in a simulation study, but indicated 
that using these approximations
for inference beyond point estimation 
is difficult. 

In contrast, 
\citet{hoff_2008b} 
pointed out that
Bayesian inference for $\bs\beta$ based on either the rank or  extended rank likelihood 
is easy to obtain 
using a very simple Gibbs sampling algorithm. 
Consider a normal prior distribution for $\bs\beta$
with density $\pi(\bs\beta)$. 
Given the information that $\bs Z \in S(\bl y)$, the uncertainty about $\bs\beta$ 
is then  described by
the conditional  density  $\pi(\bs\beta |  \bs Z \in S(\bl y)) \propto  \pi(\bs\beta) \times L(\bs\beta: S(\bl y))$. 
This posterior density is the 
$\bs\beta$-marginal density of the 
joint conditional density $\pi( \bs\beta, \bs Z | \bs Z\in S(\bl y))$. 
As such, a MCMC approximation of the former can 
be obtained from the $\bs\beta$-values of a MCMC approximation to the latter. 
For the particular prior distribution $\bs\beta\sim N_p(0,\tau^2I)$, the steps of a Gibbs sampler to 
approximate the distribution with density $\pi( \bs\beta, \bs Z | \bs Z\in S(\bl y))$ are as follows:
Given a current state $\bs Z = (Z_1,\ldots, Z_n)$ and $\bs\beta$, 
\begin{enumerate}
\item iteratively for $i\in \{1,\ldots,n\}$, 
 \begin{enumerate}
\item compute the interval $s(\bl y,\bs Z_{-i})  = (\max\{Z_{i'}:y_{i'}<y_i\} , \min\{Z_{i'}:y_{i}<y_{i'}\})$, 
\item simulate $Z_i \sim N( \bl x_i^\top \bs\beta , 1)$ constrained to $s(\bl y, \bs Z_{-i})$. 
\end{enumerate}
\item simulate $\bs\beta\sim N_p(  \bl m , \bl V )$, where 
   \begin{itemize}
  \item   $\bl m =  (\bl X^\top \bl X + \bl I_p/\tau^2)^{-1} \bl X^\top \bs Z$ ; 
  \item   $\bl V =  (\bl X^\top \bl X + \bl I_p/\tau^2)^{-1}$.
\end{itemize}
\end{enumerate}
Iteration of this algorithm generates a Markov chain 
with stationary distribution equal to the 
distribution of $(\bs Z, \bs\beta)$ conditional on $\bs Z \in S(\bl y)$. 
The empirical distribution of the simulated $\bs\beta$-values may be used as a 
Monte Carlo approximation to 
the distribution with density  $\pi(\bs\beta |  \bs Z \in S(\bl y))$. 
We refer to estimation and inference using this posterior density as ``posterior extended rank likelihood estimation'' (PERLE).

Note that in the presence of ties among the observed outcomes, 
$z$-values corresponding to the same $y$-value are constrained to the same interval, and so the speed of the algorithm can be considerably increased by updating these values 
simultaneously. 
Additionally, mixing and convergence of the Markov chain can be 
improved by making generalized 
Gibbs updates of the form $(\bs Z,\bs\beta) \mapsto (c \bs Z, c\bs\beta)$ for an 
appropriately 
simulated random scalar $c$ \citep{liu_sabatti_2000}:
If $\bs Z \in S(\bl y)$ then $c \bs Z\in S(\bl y)$ for any 
$c>0$, and so if $\bs\psi = (\bs Z, \bs\beta)$ is a valid state of the 
Markov chain then so is $c \bs\psi$. For cases such as these, 
Liu and Sabatti suggest a generalized Gibbs update of $\bs\psi$ 
with target stationary density $\pi$ 
by 
simulating a value $c$ from the density proportional to 
$\pi ( c  \bs\psi ) \times c^{n+p-1}$ and then updating the state to $c \bs\psi$. 
This provides a simultaneous update to $\bs Z$ and $\bs\beta$ that reduces autocorrelation in the Markov chain while maintaining the target stationary distribution. 
For the MTLM, the appropriate distribution from which to simulate $c$ is 
such that $c^2 \sim \text{gamma}((n+p)/2 , ( ||\bs Z - \bl X\bs\beta||^2 +  ||\bs\beta||^2/\tau^2 )/2)$. 
Further computational details are available from {\tt perle}, the companion {\sf R}-package to this article.

\section{Rank-based estimation}

As described in the Introduction, the ERL is a marginal likelihood that depends
on $\bs{\beta}$ but not on the unknown transformation $G$, and so may not be as
informative as a full likelihood that depends on both $\bs{\beta}$ and $G$. The
potential information loss can be quantified via the score function and observed
information matrix of the ERL, which we derive below. We then show that in the
two extreme cases of ordinal data, continuous outcomes at one extreme   and binary
outcomes at the other, the ERL incurs no asymptotic efficiency loss.

For a given set $S = S(\bl{y})$, the log ERL is
$l(\bs{\beta}: S) = \log \Pr(\bs{Z} \in S | \bs{\beta})$. 
Differentiating with respect to $\bs{\beta}$, the score function is
$\dot{l} = \dot{P}/P$ where $P = \Pr(\bs{Z} \in S | \bs{\beta})$. Letting
$f(\bl{z}|\bs{\beta})$ denote the density of $\bs{Z}$ at $\bl{z} \in \mathbb{R}^n$,
we have
\begin{align*}
\dot{P} = \frac{d}{d\bs{\beta}} \int_S f(\bl{z}|\bs{\beta})\, d\bl{z}
        = \int_S \dot{l}(\bs{\beta}:\bl{z})\, f(\bl{z}|\bs{\beta})\, d\bl{z},
\end{align*}
where $\dot{l}(\bs{\beta}:\bl{z}) = d\log f(\bl{z}|\bs{\beta})/d\bs{\beta}$ is the
score function based on observation of $\bs{Z} = \bl{z}$. Dividing by $P$ gives
$\dot{l}(\bs{\beta}:S) = \Exp{\dot{l}(\bs{\beta}:\bs{Z}) |  \bs{Z} \in S}$, 
so the score of the ERL is the conditional expectation of the complete-data score
given $\bs{Z} \in S$. Since $\bs{Z} \sim N_n(\bl{X}\bs{\beta}, \bl{I}_n)$, we have
$\dot{l}(\bs{\beta}:\bl{z}) = \bl{X}^\top(\bl{z} - \bl{X}\bs{\beta})$, which gives
\begin{equation}
\dot{l}(\bs{\beta}:S) = \bl{X}^\top\left(\Exp{\bs{Z}| \bs{Z}\in S} -
\bl{X}\bs{\beta}\right).
\end{equation} 
The second derivative matrix of $l$ is $\ddot l = \ddot  P/P - \dot P \dot P^\top/P^2$, 
where 
\begin{equation} 
 \ddot P =  \int_S  \left (  \ddot l(\bs\beta:\bl z) + \dot l(\bs\beta:\bl z) \dot l(\bs\beta:\bl z)^\top  \right ) 
  f(\bl z | \bs \beta)  \, d\bl z. 
\end{equation} 
Plugging in the values of $\ddot l(\bs\beta: \bl z)$ and $\dot l(\bs\beta: \bl z)$ gives the observed 
ERL information matrix $I_{ERL}(\bs\beta)$: 
\begin{align}
I_{ERL}(\bs \beta) =  - \ddot l (\bs\beta: S)  & =  
   \bl X^\top \bl X - \bl X^\top \left ( \Var{ \bs Z |\bs Z\in S }\right ) \bl X  \nonumber \\
   & = 
   \bl X^\top ( \bl I_n - \Var{ \bs Z | \bs Z\in S } ) \bl X \preceq \bl X^\top \bl X.  
\label{eqn:infoloss} 
\end{align}   
where ``$\preceq$'' denotes the Loewner order.
That this information matrix is positive semidefinite
results from the following lemma:
\begin{lemma} 
\label{lem:ccineq}
Let $\bs Z\sim N_n( \bs\mu  , \bl I_n)$ for any $\bs\mu \in \mathbb R^n$ and let $S \subset \mathbb R^n$ be 
convex. 
Then  $\Var{ \bs Z | \bs Z\in S} \preceq \Var{\bs Z} = \bl I_n$. 
\end{lemma}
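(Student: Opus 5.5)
This is a standard consequence of log-concavity: conditioning a Gaussian on a convex set does not increase variance in the Loewner order. The plan is to reduce the matrix statement to a scalar one and then apply a one-dimensional variance bound for strongly log-concave densities.

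\emph{Reduction to one dimension.} It suffices to show that $\bl a^\top \Var{\bs Z|\bs Z\in S}\,\bl a \le 1$ for every unit vector $\bl a\in\mathbb R^n$. I will rotate coordinates with an orthogonal matrix whose first row is $\bl a^\top$. Since the identity covariance is rotation invariant and convexity is preserved, I may take $\bl a=\bl e_1$. I will also assume $S$ has positive probability (otherwise the conditioning is undefined); because $S$ is convex, it then has nonempty interior.

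\emph{The key step.} Write $W=Z_1$. The conditional law of $\bs Z$ given $\bs Z\in S$ has density proportional to $\exp(-\|\bl z-\bs\mu\|^2/2)\,1_S(\bl z)$. Because $1_S$ is log-concave, this density equals $\phi(\bl z-\bs\mu)$ times a log-concave function. By the Prékopa--Leindler theorem (marginals of log-concave functions are log-concave), the marginal density of $W$ has the form $\exp(-(w-\mu_1)^2/2)\,h(w)$ with $h$ log-concave. Concretely,
\[
h(w)=\int 1_S(w,\bl u)\,\exp\!\left(-\|\bl u-\bs\mu_{-1}\|^2/2\right)d\bl u .
\]
I expect this to be the main point needing care: the marginal must be not merely log-concave but $1$-strongly log-concave. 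This follows because the Gaussian factor in $w$ separates exactly from the integral over $\bl u$.

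\emph{One-dimensional variance bound.} I then need that a density $p(w)\propto e^{-V(w)}$ on $\mathbb R$ with $V''\ge 1$ (in the convex sense, allowing $V=\infty$ outside an interval) has variance at most $1$. I will use the Brascamp--Lieb inequality, $\mathrm{Var}(g(W))\le \Exp{g'(W)^2/V''(W)}$, with $g(w)=w$. For a self-contained alternative, I would use the Cramér--Rao/Poincaré route. The Poincaré inequality for $1$-strongly log-concave measures states $\mathrm{Var}(g)\le \Exp{g'^2}$, and it holds for restrictions of such measures to intervals; it can be proved by the Bakry--Émery argument, or in one dimension directly via the Chen--Wang covariance identity. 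Applying it with $g(w)=w$ gives $\mathrm{Var}(W\mid\bs Z\in S)\le 1$.

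\emph{Handling non-smoothness.} The truncation by $S$ makes $V$ non-smooth. To deal with this, I would approximate $1_S$ by the smooth log-concave functions $\exp(-k\,d(\bl z,S)^2)$ and pass to the limit $k\to\infty$ using dominated convergence. Conditional variances converge because the Gaussian tails give uniformly integrable second moments.

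Since $\bl a$ was arbitrary, this gives $\Var{\bs Z|\bs Z\in S}\preceq \bl I_n$, as claimed.
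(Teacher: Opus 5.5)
Your proof is correct, but it is organized differently from the paper's.

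\textbf{The paper's route.} The paper never reduces dimension. It applies the $n$-dimensional Brascamp--Lieb inequality directly to the linear test function $\bl u^\top \bl z$. Because that inequality is stated for $C^2$ potentials, the paper regularizes the truncated Gaussian in two stages: it first replaces the convex indicator of $\bar S$ by its Moreau envelope $m_\epsilon$, and then mollifies. It then passes to the limit twice by dominated convergence.

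\textbf{Your route.} You rotate so that the direction is $\bl e_1$. You then use Pr\'ekopa's theorem to show that the $Z_1$-marginal of the truncated law is $\phi(w-\mu_1)h(w)$ with $h$ log-concave, so the marginal is $1$-strongly log-concave. The fact that the Gaussian factor in $w$ separates exactly from the $\bl u$-integral is what makes this work, and you identify it correctly.

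\textbf{What each approach buys.} Your route gives a transparent structural reason for the result. It also confines the analytic input to a one-dimensional variance bound. There, the non-smooth case (a $1$-strongly convex potential that is $+\infty$ outside an interval) is covered directly by the Poincar\'e inequality for $1$-strongly log-concave measures, as you state. Given that bound in the generality you claim, your final smoothing step is actually unnecessary: Pr\'ekopa applies to $1_S$ itself.

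If you do keep the smoothing, note that $\exp(-k\,d(\bl z,S)^2)$ is exactly the paper's $e^{-m_\epsilon}$ with $\epsilon = 1/(2k)$. This function is only $C^{1,1}$, not $C^2$, and its one-dimensional marginals need not be $C^2$ either; a half-space already shows this. So invoking the textbook $C^2$ form of Brascamp--Lieb would require the extra mollification the paper performs.

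The paper's route, in turn, avoids Pr\'ekopa--Leindler and handles all directions at once, at the cost of the two-stage regularization. Both arguments extend to $N_n(\bs\mu,\Sigma)$, as the paper remarks; in your case this is by whitening first.
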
 
Equation (\ref{eqn:infoloss}) shows how  $\Var{\bs Z | \bs Z\in S}$ quantifies the information loss from 
only observing $\bs Z\in S$ rather than observing $\bs Z$ completely. 
Intuitively, we expect that the coarser the partial ordering given by $S$, the higher 
the conditional variance and 
the greater the information loss. While it is difficult to fully investigate all degrees of coarseness, 
we can study the efficiency of estimators based on the ERL 
in the cases of minimal and maximal coarseness, that is, the cases
where $G$ is strictly monotone and there are no ties among the $Y_i$'s, and that where 
$G$ is a step function and the $Y_i$'s are binary.

In case of strictly increasing $G$, the location of the $Z_i$'s relative to each other is increasingly 
revealed as $n\rightarrow \infty$, although their absolute location is not. As such, 
we might expect the ERL to have similar asymptotic information as the
 normal linear regression model $\bs Y\sim N_n(\bl 1_n \alpha + \bs X\bs \beta , \bl I_n)$, 
that is, the case that $G$ is an unknown location shift.  
Recall that for this submodel of the MTLM  the efficient information for the MLE  
$\tilde {\bs \beta}$ is $I_{\rm eff}(\bs\beta) = \bl X^\top (\bl I_n -\bl 1 \bl 1^\top /n) \bl X$.  
For the special value $\bs\beta=\bl 0$, this intuition is correct:
\begin{theorem} 
\label{thm:effZero} 
Let $\bs\beta=\bl 0$ and $\bl x_1,\ldots, \bl x_n$  be i.i.d.\  with $\Exp{\|\bl x_i\|^2} < \infty$. 
Then as $n\rightarrow \infty$, 
\[ ( I_{\rm eff}(\bl 0) - I_{ERL}(\bl 0 ))/n =   \bl X^\top ( \Var{\bs Z | \bs Z \in S}  - \bl 1\bl 1^\top /n)\bl X /n  \stackrel{p}{\rightarrow} 0.  \]
\end{theorem}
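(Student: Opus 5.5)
The plan is to reduce the conditional variance to the covariance matrix of normal order statistics, permuted by the ranks, and then average over the random ranks. At $\bs\beta=\bl 0$ the $Z_i$ are i.i.d.\ $N(0,1)$. Since there are no ties (the continuous case in which this theorem is framed), $S=S(\bl y)=\{\bl z: r(\bl z)=\bl r\}$. Conditionally on $\bs Z\in S$, the vector $\bs Z$ equals $\bl P_{\bl r}\bl W$. Here $\bl W=(Z_{(1)},\ldots,Z_{(n)})$ is the vector of order statistics of $n$ i.i.d.\ standard normals and $\bl P_{\bl r}$ is the permutation matrix sending position $j$ to the unit with rank $j$. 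This uses the fact that, for i.i.d.\ continuous data, the order statistics are independent of the ranks. Hence
\[
\Var{\bs Z|\bs Z\in S}=\bl P_{\bl r}\bs\Sigma\bl P_{\bl r}^\top,\qquad \bs\Sigma=\Var{\bl W}.
\]
Moreover, at $\bs\beta=\bl 0$ the rank vector $\bl r$ is a uniform random permutation independent of $\bl X$. Because $\bl P_{\bl r}\bl 1=\bl 1$, the quantity of interest is $\bl X^\top\bl P_{\bl r}\bl A\bl P_{\bl r}^\top\bl X/n$ with $\bl A=\bs\Sigma-\bl 1\bl 1^\top/n$.

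\textbf{Step 1: structure of $\bs\Sigma$.} First show $\bs\Sigma\bl 1=\bl 1$. Apply Stein's identity to the Lipschitz map $\bl z\mapsto z_{(j)}$, whose gradient is a.e.\ a standard basis vector:
\[
\Cov{Z_{(j)},\textstyle\sum_k Z_k}=\sum_k \Exp{\partial z_{(j)}/\partial z_k}=1 .
\]
Since $\sum_j Z_{(j)}=\sum_k Z_k$, this says exactly that $\bs\Sigma\bl 1=\bl 1$. So $\bl 1/\sqrt n$ is an eigenvector of $\bs\Sigma$ with eigenvalue $1$, and therefore $\bl A$ is $\bs\Sigma$ compressed to $\bl 1^\perp$. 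In particular $\bl A$ is positive semidefinite, and $\bl A\bl 1=\bl 0$.

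Next compute the trace. Let $\bl m=\Exp{\bl W}$. Since $\Exp{\sum_j Z_{(j)}^2}=n$, we get
\[
\tr\bl A=\tr\bs\Sigma-1=n-\|\bl m\|^2-1 .
\]
It must then be shown that $\|\bl m\|^2/n\to 1$, so that $\tr\bl A=o(n)$. This is the step I expect to be the main technical point. I would prove it by showing that $(1/n)\sum_j(m_j-\Phi^{-1}(j/(n+1)))^2\to0$, using standard quantile/order-statistic $L^2$ convergence, or equivalently $W_2$ convergence of the empirical distribution of $\bl m$ to $N(0,1)$. Combined with $(1/n)\sum_j\Phi^{-1}(j/(n+1))^2\to1$, this gives the claim. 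A cleaner route is Jensen's inequality together with the $W_2$ consistency of the empirical measure of $\bs Z$: $n^{-1}\Exp{\|\bl W-\bl m\|^2}\le n^{-1}\Exp{\|\bl W-\bl q\|^2}\to 0$, where $\bl q$ is the vector of normal quantiles.

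\textbf{Step 2: averaging over the permutation.} Fix a column $\bl a$ of $\bl X$ and condition on $\bl X$. For a uniform random permutation $\bl P$ and a symmetric $\bl A$ with $\bl A\bl 1=\bl 0$, a direct computation of the first two permutation moments gives
\[
\Exp{\bl a^\top\bl P\bl A\bl P^\top\bl a\,|\,\bl X}=\frac{\tr\bl A}{n-1}\Big(\|\bl a\|^2-(\bl 1^\top\bl a)^2/n\Big)\le \frac{\tr\bl A}{n-1}\|\bl a\|^2 .
\]
The quadratic form is nonnegative. Also $\|\bl a\|^2/n\to\Exp{x_{ik}^2}<\infty$ almost surely by the law of large numbers, while $\tr\bl A/(n-1)\to0$ by Step 1. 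Conditional Markov's inequality then shows that $\bl a^\top\bl P\bl A\bl P^\top\bl a/n\stackrel{p}{\rightarrow}0$.

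\textbf{Step 3: off-diagonal entries.} Since $\bl P\bl A\bl P^\top\succeq 0$, Cauchy--Schwarz in the semi-inner product it defines gives
\[
|\bl a^\top\bl P\bl A\bl P^\top\bl b|\le(\bl a^\top\bl P\bl A\bl P^\top\bl a)^{1/2}(\bl b^\top\bl P\bl A\bl P^\top\bl b)^{1/2}
\]
for any two columns $\bl a,\bl b$ of $\bl X$. So the off-diagonal entries of the $p\times p$ matrix, divided by $n$, also tend to zero in probability. This proves the theorem, and shows in passing that $\bl X^\top(\Var{\bs Z|\bs Z\in S}-\bl 1\bl 1^\top/n)\bl X$ is positive semidefinite, consistent with the inequality $I_{ERL}\preceq I_{\rm eff}$.
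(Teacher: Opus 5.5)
Your proposal is correct and follows essentially the same route as the paper's proof. The conditional variance is written as the rank-permuted covariance of the normal order statistics, and its centered trace is shown to be $o(n)$ by Jensen's inequality together with $W_2$ consistency of the empirical quantile function. The first two moments of a uniform random permutation, independent of $\bl X$ at $\bs\beta=\bl 0$, then give an expectation bound followed by Markov's inequality, and the off-diagonal entries are controlled by the diagonal through positive semidefiniteness.

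The only differences are cosmetic. You obtain $\Var{\bs Z_{()}}\bl 1=\bl 1$ (hence $\bl A\succeq 0$ and $\bl A\bl 1=\bl 0$) from Stein's identity, whereas the paper gets the same decomposition from the independence of $\bar Z$ and $\bs Z-\bar Z\bl 1$. You also condition on $\bl X$ before applying Markov's inequality, rather than averaging jointly over $\bl X$ and the ranks.
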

More generally, \citet{bickel_ritov_1997} studied  
the asymptotic efficiency of the maximum rank likelihood estimator (MRLE) $\hat{\bs\beta}$. 
In particular, they asked the question:
among estimators of $\bs{\beta}$ that use $\bs{Y}$ but make no
assumptions about $G$ beyond strict monotonicity, is there any efficiency loss from
using the ERL rather than a full likelihood? The following theorem, due to
Bickel and Ritov (1997), says that there is no such loss. 

\begin{theorem}[Bickel and Ritov, 1997]\label{thm:info-bickel} 
Under some regularity conditions, 
for the MTLM with strictly increasing $G$
there exists a parametric submodel 
such that
$\sqrt{n}(\hat{\bs{\beta}} - \bs{\beta}) \overset{d}{\to}
N(\bl{0}, I_{\mathrm{eff}}^{-1}(\bs{\beta}))$, where 
$I_{\mathrm{eff}}(\bs{\beta})$ is the efficient information for $\bs{\beta}$ in the submodel. 
\end{theorem}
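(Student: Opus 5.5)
Since this result is quoted from \citet{bickel_ritov_1997} and its regularity conditions are left unspecified, what follows is the route I would take to reconstruct it rather than a complete argument. The organizing idea is that the MTLM with strictly increasing $G$ is a \emph{transformation model}: the group of strictly increasing maps acts on $\bs Y$, the ranks $r(\bs Y)=r(\bs Z)$ are a maximal invariant under this action, and $L(\bs\beta:\bl r)$ is the marginal likelihood of that invariant.

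\textbf{Step 1 (parametric submodels and efficient score).} I would consider submodels $G_t = G\circ(\mathrm{id}+t h)^{-1}$, indexed by smooth directions $h$. Each such submodel contributes a nuisance score in the full likelihood. The efficient score for $\bs\beta$ is the full-data score minus its projection onto the closed span of these nuisance scores. The submodel referred to in the statement is the least favorable direction $h^*$ that attains this projection. It is characterized by an integral (normal-type) equation in $h$ involving the joint law of $(\bl x_i, Z_i)$. I would first show that, under smoothness and moment conditions on $\bl x_i$, this equation has a solution. The infimum over submodels of the parametric information is then attained, and equals $I_{\rm eff}(\bs\beta)$.

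\textbf{Step 2 (LAN for the rank likelihood).} From Section 3, the rank score is $\bl X^\top(\Exp{\bs Z|r(\bs Z)}-\bl X\bs\beta)$. I would approximate $\Exp{Z_i|r(\bs Z)}$ by a deterministic function of $r_i/(n+1)$ and $\bl x_i$, namely a quantile of the mixture law $H$ of $Z_i$ plus a correction, using a Hájek-type projection of rank statistics. This would show that the normalized rank score is asymptotically linear, with influence function $\ell^*(\bl x_i,Z_i)$. I would then verify that $\ell^*$ coincides with the efficient score from Step 1. The identification rests on two facts:
\begin{itemize}
\item conditioning on the maximal invariant removes exactly the components of the score that lie in the nuisance tangent space;
\item ranks become asymptotically sufficient for the relative configuration of the $Z_i$.
\end{itemize}
A second-order expansion of $\log L(\bs\beta+\bl u/\sqrt n:\bl r)$ would then give local asymptotic normality with information matrix equal to $I_{\rm eff}(\bs\beta)$.

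\textbf{Step 3 (the estimator).} Given LAN, I would prove consistency of the MRLE $\hat{\bs\beta}$ using identifiability of $\bs\beta$ from the rank law together with a uniform law of large numbers for $n^{-1}\log L$. The standard quadratic-approximation argument for M-estimators then yields $\sqrt n(\hat{\bs\beta}-\bs\beta)\overset{d}{\to}N(\bl 0,I_{\rm eff}^{-1}(\bs\beta))$. Because this matches the parametric MLE in the least favorable submodel, there is no loss from reducing to ranks.

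\textbf{Main obstacle.} I expect Step 2 to be the hardest, specifically controlling $\Exp{\bs Z|r(\bs Z)}$ when $\bs\beta\neq\bl 0$. In that case the $Z_i$ are not exchangeable, so the classical null-hypothesis rank theory (which underlies Theorem~\ref{thm:effZero}) does not apply directly. My first attempt would be to use contiguity (Le Cam's third lemma) to transfer the expansion from $\bs\beta=\bl 0$ to local alternatives. I would then extend to fixed $\bs\beta$ by a change of measure that conditions on the $\bl x_i$ and uses the empirical quantile function of $H$, with the necessary uniformity coming from tail and smoothness conditions on the law of $\bl x_i$.
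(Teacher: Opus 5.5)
The paper gives no proof of this theorem; it is simply quoted from \citet{bickel_ritov_1997}. Your outline therefore has to stand on its own, and it has a genuine gap exactly at the step you flag as the main obstacle.

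Le Cam's third lemma only carries an expansion obtained at $\bs\beta=\bl 0$ to alternatives $\bs\beta=\bl 0+O(n^{-1/2})$. For fixed $\bs\beta_0\neq\bl 0$, the laws of $(r(\bs Y),\bl X)$ under $\bs\beta_0$ and under $\bl 0$ are asymptotically mutually singular. So nothing from null rank theory transfers, and your ``change of measure'' has no content.

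Even with a score expansion at $\bs\beta_0$, your second-order LAN argument would need $n^{-1}\bl X^\top(\bl I_n-\Var{\bs Z\mid r(\bs Z)})\bl X$ to converge to the efficient information uniformly over $n^{-1/2}$-neighborhoods of $\bs\beta_0$. That is no easier, and you do not address it.

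The standard route (H\'ajek's device) never touches ${\rm E}[\bs Z\mid r(\bs Z)]$. Because the law of $(r(\bs Y),\bl X)$ is free of $G$, for any nuisance path $G_n$
\[
\frac{L(\bs\beta_0+\bl u/\sqrt n:\bl r)}{L(\bs\beta_0:\bl r)}={\rm E}_{\bs\beta_0,G}\Bigl[\frac{dP_{\bs\beta_0+\bl u/\sqrt n,\,G_n}}{dP_{\bs\beta_0,G}}\Bigm| r(\bs Y),\bl X\Bigr].
\]
Take $G_n$ along your least favorable direction $h^*$ from Step 1. The full-data log likelihood ratio is then LAN with score $n^{-1/2}\sum_i\ell^*(\bl x_i,Z_i)$ and information $I_{\rm eff}(\bs\beta_0)$. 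Contiguous likelihood ratios are uniformly integrable, so the conditional expectation passes through this expansion once the score is, up to $o_p(1)$, a function of $(r(\bs Y),\bl X)$.

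That is the real content of your two bullets. The first is only correct as ``conditioning on $(r(\bs Y),\bl X)$ annihilates every nuisance score''; it also removes more than that, which is why the second is needed.

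The remaining work is a fixed-$\bs\beta_0$ computation. Let $F_Z$ be the marginal CDF of $Z_i$ (your $H$). Replacing $Z_i$ by $F_Z^{-1}(r_i/(n+1))$ changes $\sum_i\ell^*(\bl x_i,Z_i)$ by a term whose H\'ajek projection is $\sum_j\kappa(Z_j)$, where $\kappa$ has mean zero and $\kappa'(z)=-{\rm E}[\partial_z\ell^*(\bl x,Z)\mid Z=z]$. Integration by parts gives
\[
{\rm E}\bigl[\ell^*(\bl x,Z)\{h'(Z)-(Z-\bl x^\top\bs\beta_0)h(Z)\}\bigr]=-{\rm E}\bigl[h(Z)\,\partial_z\ell^*(\bl x,Z)\bigr].
\]
Hence orthogonality of $\ell^*$ to the nuisance scores of your submodels $G_t$ is exactly $\kappa'\equiv 0$, so $\kappa\equiv 0$.

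Step 3 has a second, more easily repaired gap. $\log L(\bs\beta:\bl r)$ is not a sum over observations, so there is no off-the-shelf uniform law of large numbers for $n^{-1}\log L$. Its pointwise limit away from $\bs\beta_0$ is a large-deviation quantity you would have to compute.

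You do not need it. $S(\bl y)$ is convex, so $L(\bs\beta:S(\bl y))$ is log-concave in $\bs\beta$. This follows from Pr\'ekopa's theorem, or directly from (\ref{eqn:infoloss}) and Lemma \ref{lem:ccineq}, which give $-\ddot l\succeq 0$ everywhere. For a concave local log-likelihood process, pointwise LAN already gives $\sqrt n(\hat{\bs\beta}-\bs\beta_0)=I_{\rm eff}(\bs\beta_0)^{-1}\Delta_n+o_p(1)$ by the convexity argument of Hjort and Pollard. No separate consistency proof or uniform law of large numbers is required.
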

In other words, the MRLE is semiparametrically efficient: no estimator that is
consistent and asymptotically normal uniformly over all strictly monotone transformation
submodels --- even those based on full observation of $\bs Y$ --- can have asymptotic variance smaller than
$I_{\mathrm{eff}}^{-1}(\bs{\beta})$, the asymptotic variance of the MRLE $\hat{\bs\beta}$. 

The opposite extreme of information loss occurs in the case  of  
binary $Y_i$'s where the transformation $G$ reduces to a step function
with a single unknown threshold $\alpha \in \mathbb{R}$. 
Letting 
$N_0 = \{ i: y_i = 0\}$ be the indices corresponding to 
the zero outcomes, 
the set $S(\bl y)$ may be written as the union of $|N_0|$ disjoint sets $S_1(\bl y),\ldots, S_{|N_0|}(\bl y)$
of the form $S_k (\bl y)= \{ \bl z\in \mathbb R^n :  \max \{z_i: i \in N_0\setminus\{k\}\} < z_k < 
 \min\{z_i:  i\not\in N_0\}\}$.
The ERL may therefore be written as 
\begin{align*} 
L(\bs\beta: S(\bl y)) & =  \sum_{k\in N_0} \Pr( \bs Z \in S_k(\bl y)|\bs\beta ) \\
& = \sum_{k\in N_0}  \int 
 \left\{
\prod_{i\neq k} 
  \Phi(\alpha-\bl{x}_i^{\top}\bs{\beta})^{1-y_i} (1-\Phi(\alpha-\bl{x}_i^\top\bs{\beta}))^{y_i } \right \} \phi(\alpha-\bl{x}_k^\top\bs{\beta})\,d\alpha \\ 
& = 
 \int 
 \left\{
\prod_{i=1}^n
  \Phi(\alpha-\bl{x}_i^{\top}\bs{\beta})^{1-y_i} (1-\Phi(\alpha-\bl{x}_i^\top\bs{\beta}))^{y_i } \right \}\left  (\sum_{k\in N_0} \frac{  \phi(\alpha-\bl{x}_k^\top\bs{\beta})}{\Phi(\alpha-\bl x_k^\top \bs\beta )} \right )   \, d\alpha
\\ 
& \equiv \int L( \bs\theta : \bl y) w(\bs\theta ) \, d\alpha
\end{align*} 
where $\bs\theta= (\alpha,\bs\beta)$, and $L(\bs\theta:\bl y)$ is the usual probit regression likelihood function 
for $\bs\theta$ given the data vector $\bl y$. 
The ERL in this case 
resembles an integrated likelihood for $\bs{\beta}$ obtained by treating
$\alpha$ as a nuisance parameter with a pseudo-prior density $\pi(\alpha|\bs\beta) \propto w (\bs\theta)$. 
As such, 
Bayes-type estimators, such as the PERLE 
 described in Section 2.2, 
 are  asymptotically efficient  under standard conditions:
\begin{theorem}\label{thm:rmle-sqrt-n}  
For each $n$ let $(\bl x_1,Y_1) ,\ldots, (\bl x_n,Y_n)$ be i.i.d.\  
with $\Pr(Y_i=0|\bl x_i) = \Phi(\alpha - \bl x_i^\top \bs\beta)= 1-\Pr(Y_i=1|\bl x_i)$, 
  $\Exp{\|\bl x_i\|^6} < \infty$,
 and $\Var{\bl  x_i}$  strictly positive definite. 
Let $\tilde{\bs\theta} =  (\tilde \alpha, \tilde{\bs\beta})$ be the MLE  of $\bs\theta=(\alpha,\bs\beta)$
 based on the full probit likelihood,  
and let $\hat{\bs\beta}$ be the expectation of $\bs\beta$ under the probability distribution with 
density proportional to $\pi_\beta (\bs\beta ) L(\bs\beta:S(\bs Y))$ where $\pi_\beta(\bs\beta)$ is a nonsingular multivariate normal density. 
Then 
\begin{align*} 
\sqrt{n}(\hat{\bs\beta} - \tilde{ \bs \beta}) & = o_p(1)  \\
\sqrt{n}(\hat{\bs \beta} - \bs \beta) & \stackrel{d}{\rightarrow} N_p(\bl{0},\, [I(\bs\theta)^{-1}]_{\beta\beta}),
\end{align*} 
where  $[I(\bs\theta)^{-1}]_{\beta\beta}$ is the $(\bs\beta,\bs\beta)$ block of $I(\bs\theta)^{-1}$
and the asymptotic variance of  $\tilde{\bs\beta}$. 
\end{theorem}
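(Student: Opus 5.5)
The approach is to treat $\hat{\bs\beta}$ as the $\bs\beta$-marginal posterior mean under a joint quasi-posterior for $\bs\theta=(\alpha,\bs\beta)$, and then apply a Bernstein--von Mises argument. From the representation above, $\pi_\beta(\bs\beta)L(\bs\beta:S(\bl y)) = \int \pi_\beta(\bs\beta)\,w(\bs\theta)\,L(\bs\theta:\bl y)\,d\alpha$. Hence $\hat{\bs\beta}$ is the $\bs\beta$-component of the mean of the joint density on $\mathbb R^{p+1}$ proportional to
\[
q_n(\bs\theta)=\pi_\beta(\bs\beta)\,w(\bs\theta)\,L(\bs\theta:\bl y).
\]
This is a posterior with a data-dependent, improper-in-$\alpha$ "prior" $\pi_\beta(\bs\beta)w(\bs\theta)$. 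Note that $w(\bs\theta)=\sum_{k\in N_0}\phi(\alpha-\bl x_k^\top\bs\beta)/\Phi(\alpha-\bl x_k^\top\bs\beta)$ grows like $|N_0|$ times a term of order $(1+|\alpha-\bl x_k^\top\bs\beta|)$ as $\alpha\to-\infty$, and decays like $\phi$ as $\alpha\to\infty$.

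The key step is a normalization. Write $w(\bs\theta)=n\,\bar w_n(\bs\theta)$, where $\bar w_n(\bs\theta)=n^{-1}\sum_{k\in N_0}\phi/\Phi$. By the uniform law of large numbers, $\bar w_n(\bs\theta)\to \bar w(\bs\theta)=\Exp{1(Y=0)\,\phi(\alpha-\bl x^\top\bs\beta)/\Phi(\alpha-\bl x^\top\bs\beta)}$ uniformly on compact neighborhoods of the true $\bs\theta_0$, and $\bar w$ is continuous and strictly positive there. The constant $n$ cancels in the posterior mean. Locally, therefore, the effective prior is a smooth positive density, which is the standard BvM setting. The probit log-likelihood is concave, with positive definite Fisher information $I(\bs\theta)$, since $\Var{\bl x_i}\succ 0$. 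The LAN expansion around the MLE $\tilde{\bs\theta}$ then gives, in total variation,
\[
q_n(\tilde{\bs\theta}+\bl h/\sqrt n)\big/\textstyle\int q_n \;\approx\; N(\bl 0, I(\bs\theta_0)^{-1}).
\]

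The main obstacle is upgrading total variation convergence to convergence of the first moment, $\sqrt n(\hat{\bs\theta}-\tilde{\bs\theta})=o_p(1)$. This requires controlling the tails, especially $\alpha\to-\infty$, where $w$ grows linearly. I would show that, with probability tending to one, the log-likelihood satisfies $\ell(\bs\theta)-\ell(\tilde{\bs\theta})\le -c\,n\min(\|\bs\theta-\tilde{\bs\theta}\|^2,\|\bs\theta-\tilde{\bs\theta}\|)$. This follows from concavity of the probit log-likelihood plus local strong concavity: concave functions decrease at least linearly outside a ball once they decrease quadratically on its boundary. Outside a shrinking ball, the factor $\exp(-cn\|\cdot\|)$ then dominates the polynomial growth of $w$ in $|\alpha|$, which is bounded by $n(1+|\alpha|+\max_k|\bl x_k^\top\bs\beta|)$, and also dominates the Gaussian $\pi_\beta$ together with the linear factor $\|\bs\theta\|$ in the mean. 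The moment conditions $\Exp{\|\bl x_i\|^6}<\infty$ are used in two places: to control $\max_k\|\bl x_k\|$ via $o_p(n^{1/6})$-type bounds, and to give the Taylor remainder of third derivatives of the probit log-likelihood (which involve cubic terms in $\bl x_i$) a uniformly $O_p(n)$ bound on neighborhoods. These bounds justify the quadratic expansion $\ell(\tilde{\bs\theta}+\bl h/\sqrt n)-\ell(\tilde{\bs\theta})=-\tfrac12\bl h^\top \hat I\bl h+O_p(\|\bl h\|^3/\sqrt n)$.

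With the tail integrals shown to be $o_p(n^{-1/2})$ relative to the normalizer, dominated convergence over $\|\bl h\|\le M_n$ gives $\int \bl h\, q_n\big/\int q_n\to 0$. The reason is that the limiting Gaussian is centered at zero, and the effective prior $\bar w_n\pi_\beta$ is locally Lipschitz, so it contributes only an $O_p(n^{-1/2})$ shift in $\bl h$. This yields $\sqrt n(\hat{\bs\beta}-\tilde{\bs\beta})=o_p(1)$. The second display then follows from standard probit MLE asymptotics, $\sqrt n(\tilde{\bs\theta}-\bs\theta)\stackrel{d}{\rightarrow}N(\bl 0,I(\bs\theta)^{-1})$, by taking the $\bs\beta$ block and applying Slutsky's lemma.
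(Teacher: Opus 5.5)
Your proposal is correct and follows essentially the same route as the paper: you write $\hat{\bs\beta}$ as the $\bs\beta$-component of a pseudo-posterior mean under the prior $\pi_\beta(\bs\beta)w(\bs\theta)$, control $w/n$ near $\bs\theta_0$ by a uniform law of large numbers, use a third-order Taylor expansion about $\tilde{\bs\theta}$ (with the sixth-moment condition) plus concavity of the probit log-likelihood to get tail decay that beats the at-most-linear growth of $w$, and finish with Slutsky. The differences are only technical: you aim for a global bound $-cn\min(\|\bs\theta-\tilde{\bs\theta}\|^2,\|\bs\theta-\tilde{\bs\theta}\|)$ (which needs quadratic decrease on a fixed-radius ball) and truncate at a growing radius $M_n$, whereas the paper works in local coordinates with a fixed radius $T$, gets linear tail decay from the quadratic approximation on the sphere $\|\bl s\|=T$, and sends $T\to\infty$ through an $\epsilon$--$\delta$ argument.
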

Based on the proof of the theorem, we also expect the result to hold for any log-concave prior density
$\pi_\beta$ such that $\|\nabla \log  \pi_\beta \|$ is locally bounded. 

Taken together, Theorems \ref{thm:info-bickel} and \ref{thm:rmle-sqrt-n} 
show that rank-based estimation via the
ERL incurs no asymptotic efficiency loss at the two extremes of the ordinal data
spectrum.
While we do not have specific results for intermediate
cases, we expect that, with considerable additional bookkeeping, the proof of
Theorem~\ref{thm:rmle-sqrt-n} can be extended to general ordered probit models
with $K$ categories for any finite $K$, and that  the PERLE or 
other estimators based on the ERL 
will achieve asymptotic
efficiency across the full range of ordinal data types considered in this article.

\section{Rank-based prediction} 

\subsection{Coverage of rank-based prediction intervals}
Before describing two practical methods of prediction interval construction, we first 
provide some general asymptotic results regarding rank-based prediction intervals $C(\bs Y)$ 
for a new observation $Y=G(Z)$, $Z\sim N(\bl x^\top\bs\beta,1)$ using 
data $\bs Y$ from the model $Y_i=G(Z_i)$, $\bs Z \sim N_n(\bl X\bs\beta,\bl I_n)$. 
We show how a rank-based interval procedure based on a $\sqrt{n}$-consistent estimate 
$\hat{\bs\beta}$ of $\bs\beta$, such as those described in the previous section, can provide 
approximately constant conditional coverage, that is, $C(\bs Y)$ 
satisfies  $\Pr ( Y  \in C( \bs Y) | \bs\beta ) \approx  1-\alpha$ 
for all $\bl x\in \mathbb R^p$.
While the interval procedure studied here differs somewhat from 
the PERLE procedures described in the next subsections, 
the results presented here indicate that rank-based prediction intervals can 
provide asymptotically constant conditional coverage.

If $\bs\beta$ were known, then 
 integers $l$ and $u$ could be selected 
such that 
$ \Pr( Z_{(l)}  <  Z<  Z_{(u)}  |\bs \beta )\approx 1-\alpha$, 
where $Z_{(k)}$ is the $k$th order statistic of $Z_1,\ldots, Z_n$ and 
 dependence of the probability on 
$\bl x$ and $\bl X$ is suppressed for notational simplicity.  
In the case of strictly increasing $G$ where 
the ranks of $\bs Z$ and $\bs Y$ are identical, it follows 
that
$ \Pr( Y_{(l)} < Y< Y_{(u)}  | \bs\beta)  =
 \Pr( Z_{(l)}  <  Z<  Z_{(u)}  |\bs \beta )\approx 1-\alpha$, 
and so in this way, 
a prediction interval for the rank of $Z$ among $\bs Z$ provides a prediction interval 
$(Y_{(l)}, Y_{(u)})$ for $Y$. 
In the case of ties, 
an interval with greater than $1-\alpha$ coverage could be constructed by appropriately 
expanding the interval, as will be done in the next two subsections.

As $\bs\beta$ is not known, 
neither is the distribution of the rank of $Z$. 
However, we expect that
a plug-in estimate of the rank distribution using a sufficiently good 
estimate $\hat{\bs\beta}$ of $\bs\beta$ 
will provide intervals with approximate $1-\alpha$ coverage. 
To make this more rigorous, for $\gamma\in (0,1)$ and $\bl b \in \mathbb R^p$, let $H(\gamma,\bl b ) = \Pr( Z < Z_{(\lfloor \gamma n \rfloor ) } | \bl b)$, where the probability is evaluated 
under $\bs Z\sim N_n( \bl X \bl b, \bl I_n)$ independently of $Z \sim N(\bl x^\top \bl b, 1)$. 
As a function of $\gamma$, $H$ is essentially the cumulative distribution function (CDF) of 
the rank of $Z$ among $Z_1,\ldots, Z_n$. In particular, if $\gamma_l$ and $\gamma_u$ 
satisfy $H(\gamma_u,\bs\beta) - H(\gamma_l, \bs\beta ) = 1-\alpha$ then 
the sequence of integers  from $\lfloor \gamma_l n \rfloor$ to  $\lceil \gamma_u n \rceil$ 
provides a prediction interval for the rank of $Z$ 
with at least $1-\alpha$ coverage. 

The accuracy of a plug-in approximation to $H(\gamma, \bs\beta)$ can be quantified 
as follows:
\begin{lemma}  
\label{lem:lipschitz} 
For any $\gamma\in (0,1)$ and $\bs\beta, \tilde {\bs\beta} \in \mathbb R^p$, 
\[ | H(\gamma, \bs\beta ) - H(\gamma ,\tilde {\bs\beta}) | \leq (2\pi)^{-1/2} 
\left (
   ||\bl X( \bs\beta-\tilde{\bs\beta})||_{\infty} + |\bl x^\top (\bs\beta-\tilde{\bs\beta})|  
\right ). 
\]
\end{lemma}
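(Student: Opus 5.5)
Write $k=\lfloor \gamma n\rfloor$ and couple the two probability models on a single probability space. Let $\bs E\sim N_n(\bl 0,\bl I_n)$ and $E\sim N(0,1)$ be independent, and set
\[
Z_i(\bl b) = \bl x_i^\top \bl b + E_i , \qquad Z(\bl b) = \bl x^\top \bl b + E .
\]
Then $H(\gamma,\bl b)=\Pr(Z(\bl b) < Z_{(k)}(\bl b))$ for every $\bl b$. Conditioning on $\bs E$, the event is $E < Z_{(k)}(\bl b) - \bl x^\top \bl b$, so
\[
H(\gamma,\bl b) = \Exp{\Phi\big(Z_{(k)}(\bl b) - \bl x^\top\bl b\big)} .
\]
This reduces the comparison of two rank probabilities to the comparison of two expectations of $\Phi$ evaluated along the same noise realization.

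Next bound the change in the argument of $\Phi$. Put $\bs\delta = \bl X(\bs\beta-\tilde{\bs\beta})$. Each coordinate satisfies $|Z_i(\bs\beta)-Z_i(\tilde{\bs\beta})| = |\delta_i| \le \|\bs\delta\|_\infty$. The map sending a vector to its $k$th order statistic is $1$-Lipschitz in the sup norm: if $\bl u \le \bl v + c\bl 1$ coordinatewise, then $u_{(k)} \le v_{(k)}+c$, and symmetrically. Hence
\[
|Z_{(k)}(\bs\beta)-Z_{(k)}(\tilde{\bs\beta})| \le \|\bs\delta\|_\infty ,
\]
and the term $\bl x^\top \bl b$ shifts by exactly $|\bl x^\top(\bs\beta-\tilde{\bs\beta})|$.

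Finally, since $\Phi$ is Lipschitz with constant $\sup\phi = (2\pi)^{-1/2}$, we get pointwise in $\bs E$
\[
\big|\Phi(\cdot)-\Phi(\cdot)\big| \le (2\pi)^{-1/2}\big(\|\bs\delta\|_\infty + |\bl x^\top(\bs\beta-\tilde{\bs\beta})|\big).
\]
Taking expectations and applying Jensen's inequality ($|\Exp{A}-\Exp{B}|\le \Exp{|A-B|}$) gives the lemma.

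The only step needing care is the Lipschitz property of order statistics under the sup norm. It follows directly from monotonicity and shift-equivariance of $u\mapsto u_{(k)}$, so I expect no real obstacle. One should also check that the coupling legitimately represents $H$ for both parameter values, which holds because $H(\gamma,\bl b)$ depends only on the joint law of $(\bs Z, Z)$ under $\bl b$, and that law is reproduced by the coupling. Ties occur with probability zero and do not affect the argument.
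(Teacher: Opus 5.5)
Your proposal is correct and follows the paper's proof essentially step for step. Like the paper, you write $H(\gamma,\bl b)$ as $\Exp{\Phi\big((\bl X\bl b+\bs\epsilon)_{(\lfloor\gamma n\rfloor)}-\bl x^\top\bl b\big)}$ with common noise, then combine the $(2\pi)^{-1/2}$-Lipschitz property of $\Phi$ with the sup-norm Lipschitz property of order statistics, and finish by moving the absolute value inside the expectation.
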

The result follows from a simple coupling argument and some Lipschitz-type  inequalities. 
From this, we can assess the asymptotic accuracy of 
 $H(\gamma, \hat{\bs\beta})$ as  an estimate of $H(\gamma, \bs\beta)$:
\begin{corollary} 
\label{cor:rootnH}
Suppose $\sqrt{n} (\hat{\bs\beta } - \bs\beta) = O_p(1)$ and $\max\{ 
||\bl x_1||, \ldots, ||\bl x_n||\}  = O(1)$. Then  
\[
\sup_{\gamma} | H(\gamma, \bs\beta ) - H(\gamma ,\hat {\bs\beta}) |  = O_p(n^{-1/2}). 
\] 
\end{corollary}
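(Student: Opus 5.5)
The corollary follows by applying Lemma~\ref{lem:lipschitz} with $\tilde{\bs\beta}=\hat{\bs\beta}$ and controlling the right-hand side uniformly in $\gamma$. The key observation is that the bound in Lemma~\ref{lem:lipschitz} does not depend on $\gamma$. Hence, for every realization of $\hat{\bs\beta}$,
\[
\sup_{\gamma\in(0,1)} | H(\gamma, \bs\beta ) - H(\gamma ,\hat {\bs\beta}) | \leq (2\pi)^{-1/2}\left( ||\bl X(\bs\beta-\hat{\bs\beta})||_\infty + |\bl x^\top(\bs\beta-\hat{\bs\beta})|\right).
\]
Here $H(\gamma,\hat{\bs\beta})$ is understood as $H(\gamma,\bl b)$ evaluated at $\bl b=\hat{\bs\beta}$, treating $\hat{\bs\beta}$ as fixed. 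This is legitimate because the probability defining $H$ is over a fresh, independent copy $(\bs Z,Z)$, so the deterministic lemma applies pointwise in the data and passes the supremum through at no cost.

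It then remains to bound the two terms on the right. For the first, each coordinate satisfies $|\bl x_i^\top(\bs\beta-\hat{\bs\beta})|\le ||\bl x_i||\,||\bs\beta-\hat{\bs\beta}||$ by Cauchy--Schwarz. Therefore
\[
||\bl X(\bs\beta-\hat{\bs\beta})||_\infty \le \max_i ||\bl x_i|| \cdot ||\hat{\bs\beta}-\bs\beta|| = O(1)\cdot O_p(n^{-1/2}),
\]
using the hypothesis $\max_i||\bl x_i||=O(1)$ together with $\sqrt n$-consistency. For the second term, the new feature vector $\bl x$ is fixed (or has norm bounded by the same constant), so $|\bl x^\top(\bs\beta-\hat{\bs\beta})|\le ||\bl x||\,||\hat{\bs\beta}-\bs\beta||=O_p(n^{-1/2})$. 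Summing the two $O_p(n^{-1/2})$ terms gives the claim.

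There is no real obstacle here. The only points requiring care are, first, being explicit that the supremum over $\gamma$ is harmless because the Lemma~\ref{lem:lipschitz} bound is uniform in $\gamma$; and second, that the bound on $\max_i||\bl x_i||$ is a deterministic $O(1)$. The latter is what lets the $\ell_\infty$ norm over $n$ coordinates avoid any growing factor: with only moment conditions one would pick up a $\max_i||\bl x_i||$ that could grow with $n$.
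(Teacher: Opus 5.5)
Your proposal is correct and follows the paper's proof essentially step for step. Like the paper, you apply Lemma~\ref{lem:lipschitz} with $\tilde{\bs\beta}=\hat{\bs\beta}$ and bound both terms by Cauchy--Schwarz, using $\max_i\|\bl x_i\|=O(1)$ and fixed $\bl x$. You also note, as the paper does, that the supremum over $\gamma$ is free because the Lipschitz bound does not depend on $\gamma$.
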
  

As mentioned above, if $\bs\beta$ were known then a prediction region for the rank of $Z$ could be obtained from 
$\gamma_l,\gamma_u$ that satisfy
$H(\gamma_u,\bs\beta) - H(\gamma_l, \bs\beta ) \approx 1-\alpha$.  
Absent knowledge of $\bs\beta$, we instead 
obtain estimates $\hat \gamma_l,\hat\gamma_u$ that 
satisfy $H(\hat \gamma_u,\hat{\bs\beta}) - H(\hat \gamma_l, \hat{\bs\beta} ) \approx 1-\alpha$. 
The coverage of this interval evaluated under the true $\bs\beta$ will converge to the target rate:
\begin{theorem} 
\label{thm:hconv}
Under the assumptions of Corollary \ref{cor:rootnH}, suppose $\hat \gamma_l,\hat\gamma_u$ 
satisfy $H(\hat\gamma_u, \hat{\bs\beta}) - H(\hat\gamma_l, \hat{\bs\beta})  
\stackrel{p}{\rightarrow } 1-\alpha$ as $n\rightarrow \infty$. 
Then $H(\hat\gamma_u, \bs\beta) - H(\hat\gamma_l, \bs\beta) 
\stackrel{p}{\rightarrow } 1-\alpha$ as $n\rightarrow \infty$. 
\end{theorem}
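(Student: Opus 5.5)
The argument is a triangle-inequality reduction to Corollary \ref{cor:rootnH}. Write
\[
H(\hat\gamma_u,\bs\beta)-H(\hat\gamma_l,\bs\beta) = \bigl[H(\hat\gamma_u,\hat{\bs\beta})-H(\hat\gamma_l,\hat{\bs\beta})\bigr] + R_n ,
\]
where
\[
R_n = \bigl[H(\hat\gamma_u,\bs\beta)-H(\hat\gamma_u,\hat{\bs\beta})\bigr]-\bigl[H(\hat\gamma_l,\bs\beta)-H(\hat\gamma_l,\hat{\bs\beta})\bigr].
\]
By hypothesis, the first bracket converges in probability to $1-\alpha$. It therefore suffices to show that $R_n\stackrel{p}{\rightarrow}0$; Slutsky's lemma (or simply the sum of a term converging in probability and an $o_p(1)$ term) then gives the result.

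The remainder is handled by a uniform bound. The random quantities $\hat\gamma_l,\hat\gamma_u$ depend on the data, so we cannot plug them into a pointwise bound. Instead we bound each difference by the supremum over $\gamma$:
\[
|R_n|\le 2\sup_{\gamma\in(0,1)}|H(\gamma,\bs\beta)-H(\gamma,\hat{\bs\beta})|.
\]
This bound holds for every realization, whatever values $\hat\gamma_l,\hat\gamma_u$ take. Corollary \ref{cor:rootnH} states that this supremum is $O_p(n^{-1/2})$, so $R_n=O_p(n^{-1/2})=o_p(1)$.

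The only point needing care is that the supremum in Corollary \ref{cor:rootnH} is genuinely uniform in $\gamma$, which is what justifies evaluating at data-dependent $\gamma$'s. This uniformity comes from Lemma \ref{lem:lipschitz}: its bound does not depend on $\gamma$, and under $\max_i\|\bl x_i\|=O(1)$ and $\|\hat{\bs\beta}-\bs\beta\|=O_p(n^{-1/2})$ the right-hand side is $O_p(n^{-1/2})$ uniformly in $\gamma$. The new feature vector $\bl x$ is treated as fixed, or bounded, as in the corollary. Once this is noted, the argument is complete and there is no substantive obstacle.
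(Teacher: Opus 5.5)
Your proposal is correct and follows the same route as the paper's proof: the same decomposition into the plug-in difference plus two correction terms, with the corrections made negligible by the uniformity in $\gamma$ of Corollary~\ref{cor:rootnH}. Your explicit bound $|R_n|\le 2\sup_{\gamma}|H(\gamma,\bs\beta)-H(\gamma,\hat{\bs\beta})|$ spells out the step that the paper states in one line.
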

The procedures in the next subsections essentially construct prediction 
regions from values of $\hat \gamma_l, \hat\gamma_u$  that are measurable with respect 
to the ranks of $Z_1,\ldots, Z_n$. 
If additionally the design is random, in that $\bl x_1,\ldots, \bl x_n \sim$ i.i.d.\ $P_x$, 
(so that marginally, $Z_1,\ldots, Z_n$ are i.i.d.), then 
the frequentist coverage rate of $(Z_{(\lfloor \hat \gamma_l n \rfloor )},
 Z_{(\lceil \hat \gamma_u n \rceil )})$ as a prediction interval for $Z$ is 
 $\Exp{ H(\hat\gamma_u, \bs\beta) - H(\hat\gamma_l, \bs \beta) }$. Under the above assumptions, 
this converges to $1-\alpha$ for any fixed feature $\bl x$ for the new value of $Z$: 
\begin{theorem}   
\label{thm:cconv} 
Let $\hat\gamma_l$ and $\hat\gamma_u$ be functions of the ranks of $Z_1,\ldots, Z_n$.  
Then under the assumptions of Theorem \ref{thm:hconv}, 
$\Pr( Z_{(\lfloor \hat \gamma_l n \rfloor )} < Z < 
 Z_{(\lceil \hat \gamma_u n \rceil )} | \bs\beta , \bl x) \rightarrow 1-\alpha$, 
where the probability is calculated with respect to 
 $Z_i | \bl x_i \sim N(\bl x_i^\top \bs\beta,1), \bl x_i \sim P_x$ independently for 
$i=1,\ldots, n$ and independent of $Z\sim N(\bl x^\top \bs\beta,1)$ for any fixed $\bl x\in \mathbb R^p$. 
\end{theorem}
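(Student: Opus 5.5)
The plan is to condition on the data-generating randomness of $(\bl x_1,Z_1),\ldots,(\bl x_n,Z_n)$ and write the coverage probability as the expectation of a conditional coverage. Since $\hat\gamma_l,\hat\gamma_u$ are functions of the ranks of $Z_1,\ldots,Z_n$, they are measurable with respect to $(\bl X,\bs Z)$. The new $Z\sim N(\bl x^\top\bs\beta,1)$ is independent of $(\bl X,\bs Z)$. Hence
\[
\Pr\bigl( Z_{(\lfloor \hat \gamma_l n \rfloor )} < Z < Z_{(\lceil \hat \gamma_u n \rceil )} \,\big|\, \bs\beta,\bl x\bigr)
= \Exp{ \Pr\bigl( Z_{(\lfloor \hat \gamma_l n \rfloor )} < Z < Z_{(\lceil \hat \gamma_u n \rceil )} \,\big|\, \bl X,\bs Z\bigr) } .
\]

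The first key step is to identify this with $\Exp{H(\hat\gamma_u,\bs\beta)-H(\hat\gamma_l,\bs\beta)}$, as asserted just before the theorem. Here I would use that under the random design the $Z_i$ are marginally i.i.d. with a continuous distribution. Their ranks are then uniformly distributed and independent of the order statistics $(Z_{(1)},\ldots,Z_{(n)})$, so the random indices $\hat\gamma_l,\hat\gamma_u$ are independent of the order-statistic vector. The new $Z$ is independent of everything as well.

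Conditioning on the ranks, and hence on the values $\hat\gamma_l=\gamma_l$, $\hat\gamma_u=\gamma_u$, the conditional probability therefore equals the unconditional quantity
\[
\Pr(Z< Z_{(\lceil \gamma_u n\rceil)}) - \Pr(Z\le Z_{(\lfloor \gamma_l n\rfloor)}),
\]
computed under the marginal (design-averaged) law. This is $H(\gamma_u,\bs\beta)-H(\gamma_l,\bs\beta)$, up to the distinction between floor and ceiling and between strict and weak inequalities. Because $Z$ is continuous, ties have probability zero. The floor/ceiling discrepancy changes the index by at most one, which changes the probability by $O(1/n)$ uniformly, since the law of the rank of $Z$ among $n$ points puts mass at most $O(1/n)$ on each value. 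I would verify this bound from the coupling in Lemma~\ref{lem:lipschitz}, or argue it directly.

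The second step is to pass to the limit. Theorem~\ref{thm:hconv} gives $D_n:=H(\hat\gamma_u,\bs\beta)-H(\hat\gamma_l,\bs\beta)\stackrel{p}{\rightarrow}1-\alpha$. Moreover $D_n$ is a difference of two probabilities, so it lies in $[-1,1]$ and is uniformly bounded. Convergence in probability together with uniform boundedness gives $\Exp{D_n}\to 1-\alpha$ by bounded convergence. Combining this with the $O(1/n)$ correction yields the result.

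The main obstacle I expect is in the first step: making rigorous the claim that conditioning on the rank-measurable $\hat\gamma$'s leaves the law of the order statistics unchanged. This independence requires exchangeability of the $Z_i$, which holds only marginally over $\bl x_i\sim P_x$, not conditionally on $\bl X$. So $H$ in Theorem~\ref{thm:hconv}, which is defined conditionally on $\bl X$, must be reconciled with the marginal version.

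I would handle this by working throughout with the marginal law. I would define $\bar H(\gamma)=\Exp{H(\gamma,\bs\beta)}$ over the design and interpret the coverage as $\Exp{\bar H(\hat\gamma_u)-\bar H(\hat\gamma_l)}$. I would then apply Theorem~\ref{thm:hconv} conditionally on $\bl X$ (where it holds with probability tending to one under the boundedness assumption of Corollary~\ref{cor:rootnH}) and use bounded convergence again to average over $\bl X$.
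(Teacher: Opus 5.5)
Your route is the paper's: condition on the ranks, use that under the random design the $Z_i$ are marginally i.i.d.\ so that the rank-measurable $\hat\gamma_l,\hat\gamma_u$ are independent of the order statistics, and finish by bounded convergence using Theorem~\ref{thm:hconv}. You also correctly catch two points the paper's proof passes over. First, the upper endpoint uses a ceiling while $H$ uses a floor. Second, the independence holds only marginally over the design, so conditioning on the ranks produces the design-averaged $\bar H(\gamma)=\Exp{H(\gamma,\bs\beta)}$, not $H(\gamma,\bs\beta)$ for the realized $\bl X$. The paper simply writes $\Exp{H(\hat\gamma_u,\bs\beta)}$ at this step.

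However, your repair of the second point does not close the gap. Applying Theorem~\ref{thm:hconv} for the realized design and then averaging over $\bl X$ gives $\Exp{H(\hat\gamma_u,\bs\beta)-H(\hat\gamma_l,\bs\beta)}\to 1-\alpha$, with $H$ computed from the same $\bl X$ that generated the ranks. The coverage, by contrast, is $\Exp{\bar H(\hat\gamma_u)-\bar H(\hat\gamma_l)}$, where the design is integrated out with $\hat\gamma$ held fixed. Since the ranks, and hence $\hat\gamma$, depend on $\bl X$, these two expectations need not agree, and bounded convergence cannot turn one into the other.

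What is missing is the uniform concentration $\sup_\gamma|H(\gamma,\bs\beta)-\bar H(\gamma)|\stackrel{p}{\rightarrow}0$. Once you have it, the two expectations differ by at most $2\Exp{\sup_\gamma|H(\gamma,\bs\beta)-\bar H(\gamma)|}\to 0$. The concentration does hold here. Let $F(z)=\Exp{\Phi(z-\bl x_1^\top\bs\beta)}$.
\begin{itemize}
\item Conditionally on $\bl X$, the empirical CDF of $Z_1,\ldots,Z_n$ converges to $F$ pointwise, by conditional Chebyshev plus the LLN for $n^{-1}\sum_i\Phi(z-\bl x_i^\top\bs\beta)$.
\item This convergence is uniform, by monotonicity and continuity of $F$, so $Z_{(\lfloor\gamma n\rfloor)}\stackrel{p}{\rightarrow}F^{-1}(\gamma)$.
\item Hence both $H(\gamma,\bs\beta)$ and $\bar H(\gamma)$ converge to the continuous, nondecreasing $L(\gamma)=\Phi(F^{-1}(\gamma)-\bl x^\top\bs\beta)$.
\item Because both are nondecreasing in $\gamma$ and $L$ is continuous with limits $0$ and $1$, a P\'olya-type argument makes the convergence uniform in $\gamma$.
\end{itemize}

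The same uniform approximation by $L$ also disposes of the floor/ceiling shift. Use it in place of your $O(1/n)$ bound on rank atoms, which is not uniform in $\bl x$. If $\bl x^\top\bs\beta$ lies above the support of $\bl x_1^\top\bs\beta$, the top rank carries mass at least of order $n^{-1}e^{c\sqrt{\log n}}$. Only $o(1)$ is needed, though.
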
 
We note that the assumption on $\bl x_1,\ldots, \bl x_n$  can be relaxed. 
For example, it is sufficient  that
$\max\{ 
||\bl x_1||, \ldots, ||\bl x_n||\}  = o(n^{1/2})$, although the convergence of the coverage probability will be at a slower rate.

Finally, in the case that $G$ is strictly increasing so that the ranks of $\bs Z$ are the same as 
those of $\bs Y$, the events 
$ Y_{(\lfloor \hat \gamma_l n \rfloor )} < Y < 
 Y_{(\lceil \hat \gamma_u n \rceil )} $ and
$ Z_{(\lfloor \hat \gamma_l n \rfloor )} < Z < 
 Z_{(\lceil \hat \gamma_u n \rceil )} $ 
are the same,  and so 
 $( Y_{(\lfloor \hat \gamma_l n \rfloor )},Y_{(\lceil \hat \gamma_u n \rceil )}) $ 
  is a prediction interval for $Y$ with asymptotic 
coverage $1-\alpha$, marginally over $\bs Y$ and $\bl X$. 
Specific methods for $G$ not strictly increasing are discussed in the next subsection.

\subsection{Bayesian posterior prediction with extended ranks} 
In the previous subsection we approximated the rank distribution  $\Pr( Z < Z_{(k)}  | \bs\beta)$ 
with a plug-in estimate based on a consistent estimate  $\hat{\bs\beta}$ of $\bs\beta$. 
The Bayesian analog of the plug-in estimate
 is the posterior predictive rank distribution,  $\Pr( Z < Z_{(k)}  | \bs Z \in S(\bl y))$, which 
integrates over the uncertainty in $\bs\beta$. We first discuss how this distribution can be computed
 and used to 
form a prediction interval for the rank of $Z$,  then describe how this interval generates 
a prediction interval for $Y=G(Z)$. 

Let $Z \sim N( \bl x^\top \bs\beta,1)$ be independent of $\bs Z \sim N_n( \bl X \bs\beta , \bl I_n)$. 
The posterior predictive distribution of the rank $r(\bs Z_{n+1})_{n+1}$ of 
$Z$ among $\bs Z_{n+1} = ( Z_1,\ldots,Z_n, Z)$ 
conditional on $\bs Z\in S(\bl y)$
may be approximated by adding the following two steps to each 
iteration of the Gibbs sampler 
described in Section 2.2:
\begin{enumerate}
\item Sort the $n$ values of $\bs Z$ to form $-\infty = Z_{(0)}< Z_{(1)} < \cdots < Z_{(n)} 
 < Z_{(n+1)} = \infty$;  
\item For each  $k=1,\ldots, n+1$ compute 
\[ \Pr( r( \bs Z_{n+1} )_{n+1}   = k | \bs Z,\bs\beta) = 
                             \Phi( Z_{(k)} - \bl x^\top \bs\beta ) -
                             \Phi( Z_{(k-1)} - \bl x^\top \bs\beta ), \]
\end{enumerate}
where $\Phi$ is the standard normal CDF. 
Averaged over the iterations of the Markov chain, the probabilities calculated above 
provide a Monte Carlo approximation to 
$\Pr( r(  \bs Z_{n+1} )_{n+1}   = k | \bs Z \in S(\bl y))$ for each possible rank  $k=1,\ldots, n+1$ of $Z$. 
From this, a subsequence 
 $R = \{ l+1,\ldots, u\}$  of $\{1,\ldots, n+1\}$ 
can be identified such that 
 $\Pr( r( \bs Z_{n+1} )_{n+1}    \in R | \bs Z \in S(\bl y)) 
\equiv  \Pr( Z_{(l)} < Z <  Z_{(u) }  | \bs Z \in S(\bl y)) 
\geq 1-\alpha$, so 
 $R$ has at least $1-\alpha$ posterior coverage of the rank of $Z$, conditional on $\bs Z \in S(\bl y)$. 

The set $R$ can be used to make predictions about the extended rank of $Y$ among the observed 
values $\bl y$ of $\bs Y$, 
and thus about the value of $Y$. 
To build intuition we first consider the simpler case that $G$ is strictly increasing, so that
the ranks of $\bs Y_{n+1} = (Y_1,\ldots, Y_n,Y)$ are the same as those of $\bs Z_{n+1} = (Z_1,\ldots,Z_n, Z)$. 
In this case, 
to the Bayesian with information $\bs Z \in S(\bl y)$ 
\[ 
\Pr( Y_{(l)}< Y < Y_{(u) }    | \bs Z \in S(\bl y) ) =  \Pr( Z_{(l)}< Z < Z_{(u) }    | \bs Z \in S(\bl y) ) 
  \geq 1-\alpha,  
\]
and so $C(\bs Y) = [Y_{(l)}, Y_{(u)} ]$ 
provides a prediction procedure for $Y$ 
that this Bayesian assesses as having at least $1-\alpha$ coverage. 

Generalizing this prediction interval to the case that $G$ is increasing but not strictly increasing
requires some additional book-keeping
because in this case the 
rank of $Z$  might not be the same as the rank of $Y$. 
However, 
the rank of $Z$  does determine the  possible values for the extended rank of 
$Y$, from which an interval for $Y$ can be constructed using the numerical values 
of $\bl y$. 
Let 
$y_{(1)} < \cdots < y_{(K)}$ be the ordered unique values of  the observed outcome vector $\bl y$, 
with multiplicities $n_1,\ldots, n_K$, so that $\sum_k  n_k = n$. 
Further, let $s_1,\ldots, s_K$ be the cumulative multiplicities, so that 
  $s_1=n_1$, $s_k = \sum_{j=1}^k n_j$ and $s_K = n$. Then
\begin{enumerate}
\item  if $r(\bs Z_{n+1})_{n+1} = s_k+1$ then $y_{(k)}    \leq   Y \leq  y_{(k+1) }$;
\item if  $s_{k}+1<r(\bs Z_{n+1})_{n+1} < s_{k+1} + 1$ then $ Y = y_{(k+1)}$, 
\end{enumerate}  
where for completeness, we set 
 $s_0=0$, 
$s_{K+1}= n+1$,   
and $y_{(0)}$ and $y_{(K+1)}$ to be the smallest and largest 
possible $y$-values.
The first item holds because the condition on the rank implies that $Z$ is above 
all $Z_i$'s  for which $G(Z_i) = y_{(k)} $ and below 
all $Z_i$'s  for which $G(Z_i) = y_{(k+1)} $. Absent knowledge of $G$, 
$Y$ could be equal to $y_{(k)}$, equal to $y_{(k+1)}$ or in-between.
The second item holds because the condition on the rank implies that $Z$ is between the smallest 
and largest of the $Z_i$'s for which $G(Z_i) = y_{(k+1)}$, and so $Y$ must also equal $y_{(k+1)}$. 

From the above relationships 
it is straightforward to show that if $Z_{(l)} < Z < Z_{(u)}$, 
that is, the rank of $Z$ is between $l+1$ and $u$ inclusive, then  
tight lower and upper bounds $\ubar y,\bar y$ for $Y$ are given by 
\begin{itemize}
\item $\ubar y = y_{(k_l)}$ where $k_l= \min\{ k: l\leq s_k \}$; 
\item $\bar y =  y_{(k_u)}$  where $k_u= \min\{ k: u\leq s_k \}$. 
\end{itemize}
Thus if  $l$ and $u$ satisfy $\Pr( Z_{(l)} < Z <  Z_{(u) }  | \bs Z \in S(\bl y)) \geq 1-\alpha$, 
then a $1-\alpha$ prediction interval for $Y$ is given by $C(\bs Y) = [\ubar y,\bar y]$. 
A more intuitive formula for the interval can be expressed in terms of an extended quantile function 
based on the empirical CDF $\hat F$ of the observed data vector $\bl y$. Define
$\hat F^{-1}(p)$ as 
\[ 
\hat F^{-1}(p) = \left \{ \begin{array}{ll} 
  y_{(0)} & \text{ for $p=0$, }\\ 
  \min \{ y: \hat F(y) \geq  p\} & \text{ for $p\in (0,1]$, }\\
  y_{(K+1)} & \text{ for $p>1$. }
 \end{array} \right . 
\] 
Then the PERLE prediction interval for $Y$ 
is $C(\bs Y) = [\hat F^{-1}(l/n), \hat F^{-1}(u/n)]$. 

This interval procedure for $Y$   is not a posterior prediction interval in the usual sense,  
as the posterior distribution of $r(\bs Z_{n+1})_{n+1}$ is computed only
conditional on the event $\bs Z \in S(\bl y)$, and not conditional on 
having observed the specific values of $y_{(1)},\ldots, y_{(K)}$ or their multiplicities $n_1,\ldots, n_K$.
In particular, the Bayesian who knows  only 
that $\bs Z$ lies in some set $S$, 
but not  how $S$ is obtained,
cannot construct 
this interval. However, such a Bayesian would evaluate the procedure as having  $1-\alpha$ 
coverage, marginally over the values of $Y_1,\ldots, Y_n$: 
The event $r(\bs Z_{n+1})_{n+1}\in \{  l+1\ldots, u\} $ is a subset of the event $Y \in  [\ubar{Y} ,\bar Y ]$ 
where $\ubar{Y}$ and $\bar Y$ are determined by the construction described above applied to the 
values of $\bs Y$. 
 Therefore, to the Bayesian 
who only conditions on  
 $\bs Z \in S$, 
\begin{align*} 
\Pr(  \ubar Y\leq Y \leq \bar Y   | \bs Z \in S )
 \geq \Pr( Z_{(l)} <  Z < Z_{(u)} | \bs Z \in S ) \geq 1-\alpha. 
\end{align*}
From this perspective
the PERLE interval has at least $1-\alpha$ coverage of $Y$, 
marginally over the values of $Y_1,\ldots, Y_n$.

\subsection{Conformal prediction with extended ranks} 
\label{sec:cpred} 

The Bayesian prediction interval constructed in the previous subsection 
relies on the appropriateness of the MTLM.  
If this is in doubt, then 
the MTLM can instead be used as a device to construct a 
nonparametric rank-based conformal prediction interval, for which 
exchangeability of $\{ (Y_1,\bl x_1),\ldots, (Y_{n},\bl x_{n}), (Y,\bl x) \}$ is sufficient to 
guarantee a marginal coverage rate of $1-\alpha$. 
As will be shown in Section 5, in an example with a strong  mean-variance relationship
the proposed rank-based interval can also approximately maintain
a target coverage rate conditional on  $\bl x$, whereas 
procedures using standard conformity scores will only achieve the target marginally. 

At a high level, the procedure we propose 
consists of the following steps:
\begin{enumerate}
\item[0.] Use a training dataset 
to fit the MTLM as described in Section 2.2 and obtain an approximation 
to the posterior distribution of $\bs\beta$. 
\item[1.]
Use calibration data $\{ ( Y_1,{\bl x}_1),\ldots, 
(  Y_{n},{\bl x}_n)\}$ 
to obtain a conformal prediction set for the extended rank of 
  $Y$ among $Y_1,\ldots, Y_n$, using a conformity score based on the predicted rank
of $Z$ among $Z_1,\ldots, Z_n$. 
\item[2.] Convert the prediction set for the extended rank of $Y$ to a prediction set for the 
value of $Y$ based on the observed outcomes of the calibration set. 
\end{enumerate}

Step 1 consists of a conformal algorithm that predicts $r_{n+1} \equiv r(\bs Y_{n+1})_{n+1}$, 
the extended rank of $Y$ among $Y_1,\ldots,Y_n$, 
from $r(\bs Y)$, the extended ranks of $Y_1,\ldots, Y_n$.
From $r(\bs Y)$ the values of $0=s_0 < s_1 < \cdots < s_K < s_{K+1} = n+1$ defined in Section 4.2 can be obtained, which determine the possible  extended ranks of $Y$. There are $2K+1$ possible values of $r_{n+1}$, 
corresponding to $Y$ 
 being less than all values of $\bs Y$, in-between pairs of consecutive values, equal to individual values, and 
being greater than all values. Letting $y_{(1)} < \cdots <  y_{(K)}$ be the observed 
ordered unique values of $\bs Y$, 
and defining $y_{(0)}$ and $y_{(K+1)}$ as the smallest and largest possible $y$-values, we have
\begin{align} 
 r_{n+1} & = 
\{ s_k+1 \} \text{ if }  y_{(k)} < Y < y_{(k+1)}    \label{eqn:ery1} \\
r_{n+1} & =
\{ s_{k}+1,\ldots,s_{k+1}+1\} \text{ if } Y =  y_{(k+1)}.   \label{eqn:ery2} 
\end{align} 

We evaluate each candidate extended rank $r_{n+1} \in \{ r_{n+1}^{1} ,\ldots, r_{n+1}^{2K+1} \}$ 
with the conformity score
\[ 
 s( \bl r)_{n+1} = 
\max_{j \in r_{n+1} } 
\Pr( r( \bs Z_{n+1})_{n+1} = j | \bl x_1,\ldots, \bl x_n, \bl x), 
\]
where 
 $\bl r$ is the list with elements equal to $r(\bs Y_{n+1})$, for which the last element is 
the candidate value $r_{n+1}$. 
We compute this  score with a 
Monte Carlo approximation given a distribution over $\bs\beta$-values, such as the posterior 
distribution obtained by fitting the MTLM to a training dataset  in Step 0. 
The motivation for this score is 
that it should result in a prediction region that mimics the Bayesian prediction region 
described in the previous subsection, which is constructed by including ranks for $Z$ 
that have high posterior probabilities.
As shown in \citet{hoff_2023}, conformity scores based on posterior predictive probabilities 
generally result in Bayes-optimal prediction regions, in terms of precision. 

A candidate extended rank $r_{n+1}$ for $Y$ is included in the conformal prediction set if
$s( \bl r)_{n+1}$ is
large compared to the corresponding conformity scores of the 
observed extended ranks 
$r_1,\ldots, r_n$, which are the first $n$ elements of $\bl r$. 
To assess this, for each $i=1,\ldots, n$ we compute 
the observed score
\[
 s( \bl r)_i = \max_{j\in r_i} \Pr( r( \bs Z_{n+1})_i =j | \bl x_1,\ldots, \bl x_{n},\bl x). 
\]
Computing this score for each $i=1,\ldots, n$ requires knowing the extended ranks 
of $\bs Y_{n+1}$, 
which seems to require knowledge of $Y$. However, 
the extended ranks
of $\bs Y_{n+1}$ 
can be determined from the observed extended ranks of $\bs Y$ and the 
candidate value $r_{n+1}$ of  $r(\bs Y_{n+1})_{n+1}$ as follows:
Changing notation slightly, 
let  $\xrank(y_1,\ldots, y_n) = (r_1^n,\ldots r_n^n))$, 
 and  $\xrank(y_1,\ldots,y_n, y) = (r_1^{n+1},\ldots , r_{n+1}^{n+1})$, 
with $r_i^{n} = \{ j\in \mathbb N :  \minrank_i^n \leq j \leq  \maxrank_i^n \}$
and  $r_i^{n+1} = \{ j\in \mathbb N :  \minrank_i^{n+1} \leq j \leq  \maxrank_i^{n+1} \}$. 
Then the minimum and maximum ranks of $(y_1,\ldots, y_n)$ and $(y_1,\ldots,y_n,y)$ are related  by 
\begin{itemize}
\item $\hat r_{i}^{n+1} = \hat r_{i}^{n}  + 1 \times ( \hat r_{i}^n \geq \check r_{n+1}^{n+1} )$
\item $\check r_{i}^{n+1} = \check r_{i}^{n}  + 1 \times (\check  r_i \geq \hat r_{n+1}^{n+1} ).$ 
\end{itemize}

A $1-\alpha$ prediction set $R\subset\{ r_{n+1}^1,\ldots, r_{n+1}^{2K+1}\}$  
for the extended rank of $Y$ may be constructed by including in $R$ each 
candidate 
 extended rank  $r_{n+1}$ for which $s_{n+1}(r_{n+1})$ is greater than or equal to 
the $\alpha$ quantile of $\{ s(\bl r)_1,\ldots, s(\bl r)_{n+1}\}$, that is, 
\[
  R =\left   \{ r_{n+1} \in \{ r_{n+1}^1,\ldots, r_{n+1}^{2K+1}\} :  
  \sum_{i=1}^{n+1} 1\times (s(\bl r)_{n+1}   \geq  s(\bl r)_i  ) \geq  \alpha \times (n+1)  \right  \}. 
\]
Note that the extended ranks $\bl r= r(\bs Y_{n+1})$ and calibration scores $s(\bl r)$ 
are recomputed for each candidate value $r_{n+1}$ of the extended rank of $Y$. 

As will be described in a moment, the conformal prediction set $R$ for the extended 
rank $r_{n+1}$ satisfies 
$\Pr( r( \bs Y_{n+1})_{n+1} \in R ) \geq 1-\alpha$, where the probability  
holds over exchangeable outcome-feature pairs 
 $\{ (Y_1,\bl x_1),\ldots, (Y_{n},\bl x_{n}), (Y,\bl x) \}$. 
From the set $R$, a prediction interval $C$ for $Y$ may be constructed using 
the observed unique values  of $\bs Y$ and Equations \ref{eqn:ery1} and \ref{eqn:ery2}. 
Let $l$  and $u$ be the indices of the smallest and largest 
extended ranks in $R$, respectively, and let $\ubar y$ be the lower bound implied by 
$r_{n+1}^{l}$ and $\bar y$ be the upper bound implied by $r_{n+1}^{u}$. Then 
the event $r(\bs Y_{n+1})_{n+1} \in R$ implies the event that 
$Y \in C \equiv [ \ubar y , \bar y ]$, hence the probability of the latter is larger than 
the former and so  $\Pr(Y \in [ \ubar y , \bar y ] ) \geq 1-\alpha$. 

Like typical conformal prediction procedures, 
if $\{ (Y_1,\bl x_1 ), \ldots, (Y_{n},\bl x_{n}), (Y, \bl x)\}$ are exchangeable then 
$\Pr( Y  \in  [ \ubar y , \bar y ]) \geq 1-\alpha$ whether or not the 
MTLM holds, where in this probability calculation 
the values $(\ubar y, \bar y)$ are random variables constructed from 
$\{ (Y_1,\bl x_1 ), \ldots, (Y_{n},\bl x_{n})\}$  and $\bl x$. 
This result follows from the exchangeability of the scoring function used to construct the 
prediction set: For $\bl y_{n+1}\in \mathbb R^{n+1}$, $\bl X_{n+1} \in \mathbb R^{(n+1)\times p}$, and 
a permutation 
$\pi$ of $\{ 1,\ldots, n+1\}$, write  $\pi\circ \bl y_{n+1}$ as the elements of $\bl y_{n+1}$ permuted by $\pi$
and
$\pi \circ \bl X_{n+1}$ as the matrix obtained by permuting the rows of $\bl X_{n+1}$ by $\pi$. 
For all such $\bl y_{n+1}$, $\bl X_{n+1}$ and $\pi$, the scoring function $s$, which we 
now write as a function of $\bl y_{n+1}$ and $\bl X_{n+1}$, satisfies
\begin{align*} 
s( r(\pi\circ \bl y_{n+1}) , \pi\circ \bl X_{n+1}) & = s( \pi \circ r(\bl y_{n+1}) , \pi\circ \bl X_{n+1})  \\
 & =  \pi \circ s( r(\bl y_{n+1}), \bl X_{n+1}),    
\end{align*}
where $\pi \circ r(\bl y_{n+1}) $ is the list of extended ranks of $\bl y_{n+1}$ permuted by $\pi$. 
Letting $\bs Y_{n+1} = (Y_1,\ldots, Y_n, Y)$, the above result and the exchangeability assumption imply
\begin{align*} 
s( r(\bs Y_{n+1} ) , \bl X_{n+1}) & \stackrel{d}{=}
s( r(\pi \circ \bs Y_{n+1} ) , \pi \circ \bl X_{n+1})  \\ 
 & =  \pi \circ s( r(\bs Y_{n+1}), \bl X_{n+1}), 
\end{align*} 
and so the vector of conformity scores is also exchangeable. 
From here, the usual argument for marginal coverage of conformal prediction procedures applies: 
The probability that $r(\bs Y_{n+1})_{n+1}$ is in the set $R$ described above is equal to the 
probability that the score for $r(\bs Y_{n+1})_{n+1}$ is greater than the $\alpha$-quantile of 
the elements of $s( r(\bs Y_{n+1}), \bl X_{n+1})$, which by exchangeability is greater than or equal to $1-\alpha$. 
Finally, because
the event $r( \bs Y_{n+1}  )_{n+1} \in R$ implies the event 
$Y \in C$, we have 
\[ 
 \Pr( Y \in C )  \geq  
  \Pr( r( \bs Y_{n+1}  )_{n+1} \in R ) \geq 1-\alpha, 
\]
and so the random interval  $C=[ \ubar y , \bar y ]$ has at least $1-\alpha$ coverage for 
$Y$, marginally over values of $\bs Y_{n+1}$ and $\bl X_{n+1}$. We refer to this interval procedure 
as the PERLE-conformal prediction interval.

\section{Examples}

\subsection{Seattle rainfall}

We  model and predict daily rainfall in Seattle over a ten-year 
period from 2016 through 2025, using data from the 
\href{https://power.larc.nasa.gov/}{NASA POWER project}.  
Figure \ref{fig:sr1} displays the first two years of data as a time series and a histogram. 
The data exhibit strong seasonality, with a rainy season running October through March
and dry season running July through September. The data are also highly skewed and somewhat 
discrete, being recorded to the nearest 1/100th of a millimeter. 
Zero rain was recorded on 436 days, and the total number of unique recorded values of 
rainfall was $K=1097$ out of $n=3652$ 
total recorded values. 

We build a predictive model for rainfall as a function of 29  
features, including 
one-day lag values of rainfall, humidity and windspeed at locations 
northwest, due west and southwest of Seattle (nine features); 
sine and cosine functions to capture seasonality in rainfall (two features);
and interactions between the nine lagged weather variables and the sine and cosine 
functions (eighteen features).  
A normal-scores transformation was applied to each of the nine 
weather variables before the interaction terms were constructed. 
Inclusion of these types of features is fairly typical for empirical models of local 
rainfall \citep{vogel_etal_2020}. 
An initial analysis using a quarter power transformation indicated a small degree of  
residual autocorrelation (lag-1 and lag-2 sample autocorrelations of 0.056
  and -0.071, 
respectively). 

\begin{figure} 
\begin{knitrout}
\definecolor{shadecolor}{rgb}{0.969, 0.969, 0.969}\color{fgcolor}
\includegraphics[width=\maxwidth]{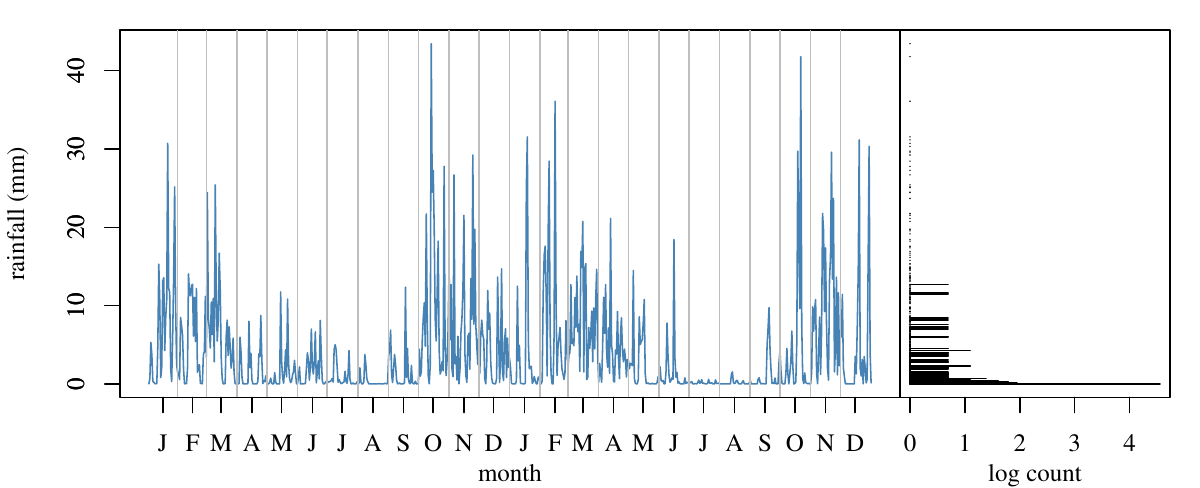} 
\end{knitrout}
\caption{Seattle rainfall data from 2016-01-01 through 2017-12-31. The left panel 
displays the data as a time series, the right panel as a histogram. } 
\label{fig:sr1}
\end{figure}

Using all ten years of data ($n=3652$), we ran the Gibbs sampler described in Section 2.2 for 11,000 
iterations, dropping the first 1000 iterations to allow for convergence of the Markov chain 
to the stationary distribution. Parameter values at every 10th iteration were retained, 
resulting in 1000 $\bs\beta$ values with which to approximate the posterior 
distribution. 
Mixing of the Markov chain was very good:  
across all parameter values
the smallest effective sample size based on the MCMC sample of size 1000 
was 892.
Posterior mean estimates (PERLEs) and 95\% posterior quantile intervals for the regression parameters are shown in the left panel of Figure 
\ref{fig:sr2}. The right panel compares $t$-scores from an OLS fit using the quarter-power transformed 
data to analogous $t$-scores from the posterior distribution of $\bs\beta$, obtained 
by dividing the posterior mean of each coefficient by its posterior standard deviation. 
The plot indicates that, in terms of assessing significance of the features, the two approaches are 
nearly equivalent, even though the PERLE $t$-scores do not require a pre-specified data transformation.

\begin{figure}
\begin{knitrout}
\definecolor{shadecolor}{rgb}{0.969, 0.969, 0.969}\color{fgcolor}
\includegraphics[width=\maxwidth]{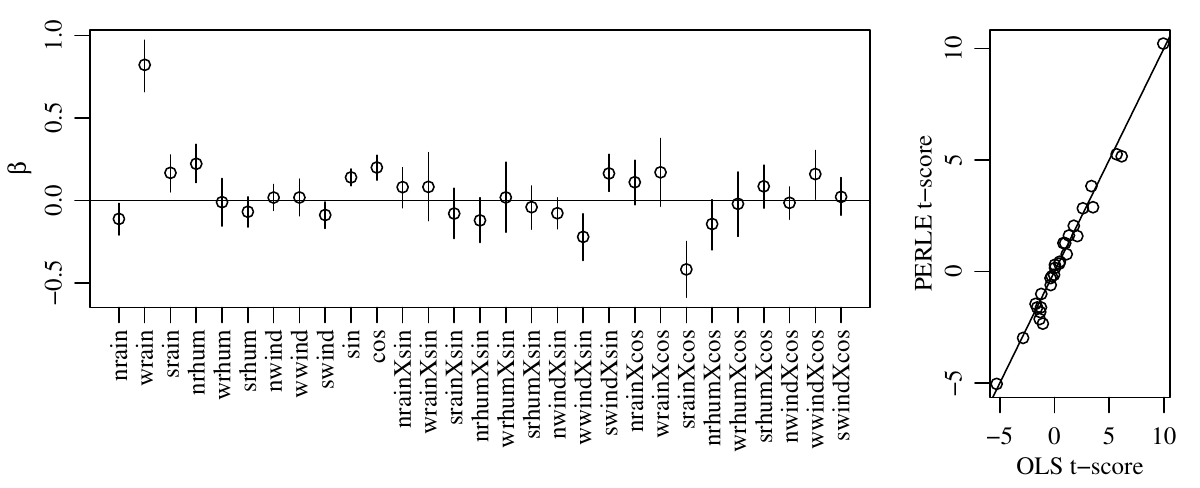} 
\end{knitrout}
\caption{Posterior summary for the Seattle rain data. The left panel displays PERLEs and 95\% PERLE confidence intervals of regression parameters. 
The right panel compares PERLE $t$-scores to those of 
  OLS estimates based on the quarter-power transformation.  } 
\label{fig:sr2} 
\end{figure}

We performed an out-of-sample prediction experiment to evaluate the coverage of 
several prediction interval methods, including the two 
described in Section 4, as well as five other conformal methods. The first two of these  
are simply the standard split-conformal method using the magnitude of the residual deviation as a conformity 
score \citep{papadopoulos_2002}, where the residuals are from a linear model fit to the raw data and to the quarter-power transformed data. 
The next two are based on these same linear model fits, but the 
intervals were calibrated using the locally-weighted
 procedure described in 
\citet{guan_2023}, which uses weighted quantiles of calibration residuals so that 
the prediction interval for a given feature is calibrated primarily by residuals corresponding 
to similar features according to a user-specified kernel function. 
We used a Gaussian kernel function with a bandwidth parameter  chosen by trial and error. 
The last procedure we tried was conformalized quantile regression 
\citep{romano_patterson_candes_2019} as implemented by the {\sf R}-package {\tt probably} \citep{kuhn_vaughan_ruiz_2025}, which fits a quantile random forest model to training data, then adjusts the resulting prediction intervals 
with a calibration dataset.

For each week in the second half of the dataset (starting in July 2021), we used all data preceding that week 
as the training data to fit the model and make predictions for each day in the given week. For 
methods based on conformal calibration, the data from the preceding 365 days 
were used for calibration, 
and the remaining preceding data were used for model fitting. 
We sorted the outcomes into eight bins based on the sample quantiles, resulting 
in about 228 outcomes in each bin,  and computed the 
coverage rate and expected interval width in each bin and for each prediction method using a nominal 80\% 
coverage rate. We use this rate rather than a 95\% rate so that differences between the methods are more clearly distinguished. 

The results of this study are displayed in Figure \ref{fig:sr3}.   
To simplify the figure, the results for the two standard split-conformal intervals are not displayed, as they 
were very similar to their normal-theory counterparts presented in the Introduction, and
were slightly improved upon by their weighted versions. 
Overall, all non-rank-based procedures show overcoverage for most quantile bins but substantial 
undercoverage for the highest quantile bin. 
For the four approaches based on linear models, the biggest improvement in 
conditional coverage results from transforming the outcome, rather than using conformal calibration or weighting. 
In contrast, the Bayes and conformal rank-based approaches have coverage rates that are 
closer to the nominal level across the quantile bins. This is a result of these
procedures automatically adjusting to the mean-variance relationship in these data by having 
interval widths that vary with outcome magnitude, as seen in the plot in the right side of the 
figure. Data transformation,  weighted calibration, and quantile regression methods  provide some degree 
of heterogeneity in interval width, but not to the extent necessary to achieve approximate conditional coverage. 
We note that all methods, including non-calibrated normal-theory intervals, maintained marginal coverage 
rates at or above the nominal level. 

\begin{figure} 
\begin{knitrout}
\definecolor{shadecolor}{rgb}{0.969, 0.969, 0.969}\color{fgcolor}
\includegraphics[width=\maxwidth]{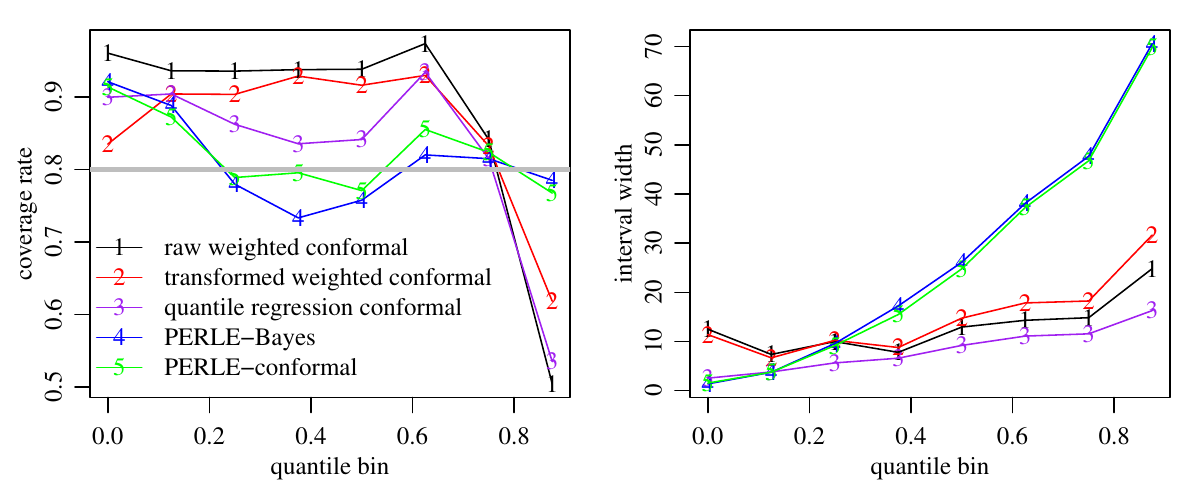} 
\end{knitrout}
\caption{Prediction intervals for the Seattle rainfall data. The left and right panels display conditional coverage rates  and interval widths, respectively, of nominal 80\% intervals of five different methods.}
\label{fig:sr3} 
\end{figure}

\subsection{Income by degree}
The 2024 General Social Survey  \citep{gss2024} provides individual-level
data on social and demographic characteristics of people living in the  United States. 
From these data, we use 
 a sample of size $n=1577$  to 
 model a survey respondent's income as a function of demographic characteristics and educational 
attainment.  
A survey respondent's 
income is recorded as belonging to one of 26  ordered income categories. We model this 
ordinal outcome as a function of the respondent's age (age in years divided by 100), 
binary sex and race variables, highest level of 
educational attainment (high school, associates 
degree, bachelors degree, graduate degree), and degree area (e.g.\ humanities, engineering, social sciences, etc.). 

We ran the Gibbs sampler described in Section 2.2 for 11,000 iterations, dropping the first 
1000 iterations to allow for convergence of the Markov chain, and saving every 10th parameter 
value for posterior approximation. Mixing of the Markov chain was very good, 
with effective sample sizes for all parameters being at least 904.
Posterior mean estimates (PERLEs) and 95\% posterior quantile intervals  
for the regression parameters $\bs\beta$ 
are displayed in the left panel of Figure \ref{fig:ibd1}. 
For comparison, we also obtained a posterior distribution of the regression parameters 
using a full ordinal model that includes a prior distribution for the function $G$, which 
for this ordinal outcome can be parameterized by 25 threshold values. An extremely diffuse 
prior distribution over these thresholds resulted in a posterior distribution for $\bs\beta$ that 
was very similar to that obtained from the posterior using the extended rank likelihood. 
In particular, posterior $t$-scores (posterior means divided by posterior standard deviations) 
obtained from the two methods were nearly identical, as shown in the right panel 
of Figure \ref{fig:ibd1}.

\begin{figure} 
\begin{knitrout}
\definecolor{shadecolor}{rgb}{0.969, 0.969, 0.969}\color{fgcolor}
\includegraphics[width=\maxwidth]{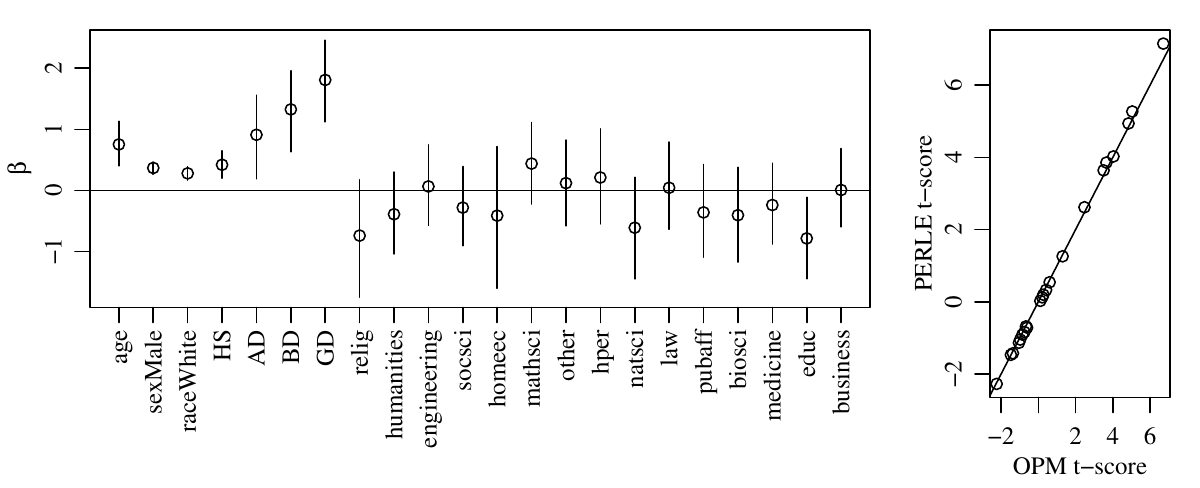} 
\end{knitrout}
\caption{Posterior summary for the income data. The left panel displays the PERLEs and 95\% PERLE confidence intervals for regression parameters. 
The right panel compares PERLE $t$-scores  to those of 
  a full ordinal probit model.  } 
\label{fig:ibd1}
\end{figure}

We performed an out-of-sample prediction experiment to assess the conditional coverage rates 
of the rank-based Bayes and conformal procedures described in Section 4, and 
compared these rates to those of  the full ordinal probit model. Specifically, we constructed 
 $80\%$ prediction intervals for the income category of each subject in the study 
using the data from all other subjects and the three interval procedures (for the conformal procedure, we randomly split data 
from the other subjects into fitting and calibration sets of size 788 each).  
The marginal coverage rates for the Bayes, conformal and full probit model are 85, 82 and  84\% respectively. 
Category-specific coverage rates and interval widths are displayed in Figure \ref{fig:ibd2}. The Bayes and conformal procedures using the extended ranks perform very similarly, and maintain approximate 
conditional coverage across the income categories. Intervals based on the full ordinal probit model 
have approximately correct marginal coverage, but have poor conditional coverage at middle to low income categories, with zero coverage for outcomes in the lowest category.  
We speculate that this is partly due to the highly unbalanced sample sizes in the categories: 
The lowest three categories have a combined sample size of 53, whereas that of the top 
three is 215. For these higher income categories with larger sample sizes, 
the intervals from the full probit model 
 maintain greater than nominal coverage while additionally being narrower than those based on the 
extended ranks. 

\begin{figure} 
\begin{knitrout}
\definecolor{shadecolor}{rgb}{0.969, 0.969, 0.969}\color{fgcolor}
\includegraphics[width=\maxwidth]{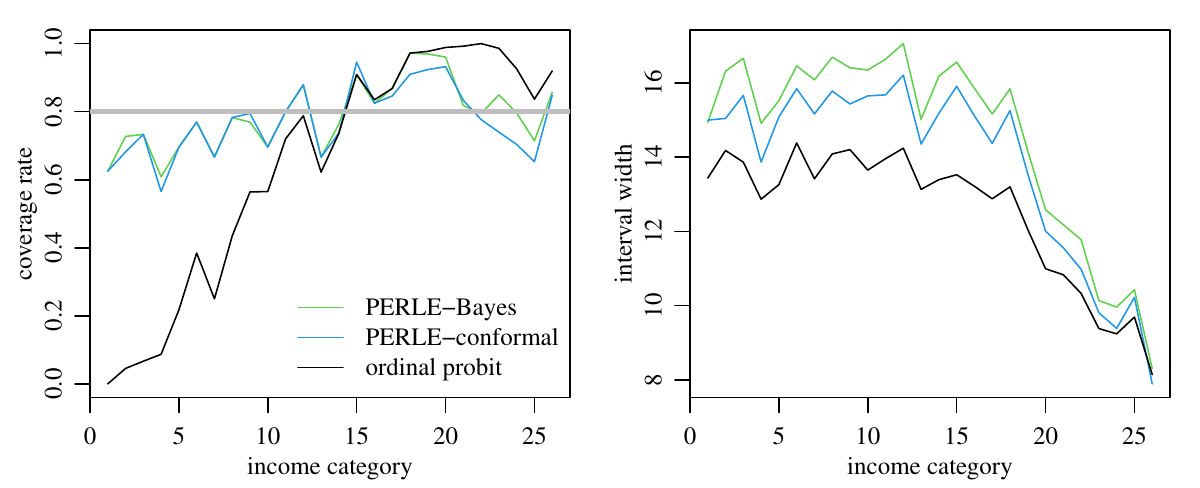} 
\end{knitrout}
\caption{Prediction intervals for the income data. The left and right panels display conditional coverage rates  and interval widths, respectively,  of nominal 80\% intervals of three different methods.}
\label{fig:ibd2}
\end{figure}

\section{Discussion} 
Data transformation and data coarsening (including rounding, binning and truncation) are standard 
preprocessing techniques that can be used to make a preferred model more appropriate and more likely to provide 
accurate inferences. 
However, treating such preprocessing steps as being outside of the modeling process can lead to 
a degree of arbitrariness in the data analysis. 
Data-based or joint estimation of a transformation $G$ along with parameters of interest $\bs\beta$ is possible, but can be difficult if
$G$ is treated nonparametrically, and still requires a modeling decision as to whether the data are to be treated as 
continuous, truncated, discrete or otherwise --- a decision for which there may be more than  one reasonable answer. 
As an alternative to deliberating over a variety of strategies for $G$, 
we advocate for treating 
$G$ as a single nuisance parameter via rank-based methods and the extended rank likelihood. 
We have discussed how the rank likelihood incurs no asymptotic loss in efficiency, 
and have shown how rank-based methods may be used to construct
approximately conditionally-calibrated prediction intervals, all 
without having to specify a parametric or nonparametric transformation family or prior distribution for $G$, or make a determination of whether 
the outcome variable is continuous or discrete. 

Our prediction experiments indicate that, while different models or data transformations
may all result in the same marginal prediction coverage rate, 
the conditional performance of procedures may vary considerably. 
Our theoretical results from Section 4 indicate that, under the MTLM,  rank-based prediction intervals can achieve 
asymptotically calibrated conditional coverage rates, regardless of what the appropriate transformation is. 
The analogy to conformal prediction is that  marginally calibrated 
intervals are provided by all choices of conformity score, whereas conditional calibration depends 
on the choice of score. Different scores may correspond to different models: A score based on 
residual magnitude corresponds to a homoscedastic model, locally-weighted calibration 
\citep{guan_2023} presumes a smooth mean-variance relationship, conformalized quantile
regression \citep{romano_patterson_candes_2019} presumes an adequate quantile model, 
and the rank-based conformity score we propose 
presumes the MTLM. 
In this sense, choosing a conformity score is not a neutral step preceding the
modeling, but is itself an act of modeling, with the assumptions entering
implicitly.

\appendix

\section*{Proofs} 
The proof of Theorem \ref{thm:rmle-sqrt-n} is much longer than the others, 
and so is presented last.

\begin{proof}[Proof of Lemma \ref{lem:xrankG}]
Let $r(\bl y)_i=\{ l,\ldots, u\}$, which may be a singleton if $l=u$. Then there are exactly $l-1$ elements of 
$\bl y$ strictly less  than $y_i$ and $n-u$ strictly greater, and so by the monotonicity of $G$,
 $l \leq r(\bl z)_i  \leq u$. 
\end{proof} 

\begin{proof}[Proof of Lemma \ref{lem:ccineq}]
We want to bound the variance of $\bs Z$ under the density 
$p(\bl z) = c(S) \exp\{ - ||\bl z-\bs\mu ||^2/2 \} 1(\bl z\in S)$. 
This can be done by approximating $p(\bl z)$ by a $C^2$ strictly log concave function, applying the 
Brascamp--Lieb variance inequality \citep{brascamp_lieb_1976}, and then taking some limits. 
Letting $\bar S$ be the closure of $S$, 
define $f(\bl z) = ||\bl z - \bs\mu ||^2/2$ for $\bl z\in \bar S$ and $f(\bl z ) = \infty$ for 
$\bl z\not \in \bar S$, so that $p(\bl z) \propto e^{-f(\bl z)}$. 
First approximate $f(\bl z)$ with   
$f_{\epsilon} =  ||\bl z-\bs\mu ||^2/2  + 
  m_\epsilon(\bl z)$, where 
$m_\epsilon(\bl z) = (2\epsilon)^{-1}  \min_{\bl s \in \bar S} || \bl z -\bl s||^2$ 
is the Moreau envelope of the function that is zero on $\bar S$ and $\infty$ off of $\bar S$, 
which is convex because $S$ and $\bar S$ are convex. 
As a result, $f_\epsilon\in C^{1,1}$ is strictly convex for each 
$\epsilon>0$ and converges pointwise to $f$ as $\epsilon \rightarrow 0$. 
However, the Brascamp--Lieb inequality applies to densities proportional to $e^{-g(\bl z)}$ where 
$g\in C^2$, and so we must further smooth the approximation of $f$.  
Let $f_{\epsilon,\delta}= ||\bl z-\bs \mu||^2/2 + m_{\epsilon,\delta}(\bl z)$ where
\[ m_{\epsilon,\delta}(\bl z ) =  ( k_{\delta} \ast m_\epsilon )(\bl z)     = 
  \int m_{\epsilon}(\bl w)  k( (\bl z - \bl w )/\delta )/\delta^n  \,  d\bl w \]
for a $C^2$ kernel density $k$. 
Then $f_{\epsilon, \delta} \in C^2$ is strictly convex and converges to $f_{\epsilon}$ pointwise as 
$\delta\rightarrow 0$. 

By the Brascamp--Lieb inequality, for a unit vector $\bl u$ the variance of $\bl u^\top \bs Z$
 under the distribution with density 
$p_{\epsilon,\delta}  \propto e^{- f_{\epsilon,\delta}(\bl z)}$
is bounded by one: 
\begin{align*} 
 \text{Var}_{\epsilon,\delta}[ \bl u^\top \bs Z] \leq  \int \bl u^\top H^{-1}_{\epsilon,\delta} \bl u \,  
   p_{\epsilon,\delta}(\bl z) \, d\bl z   \leq 1. 
\end{align*} 
The first inequality follows from the Brascamp--Lieb inequality, and the second because 
 $H_{\epsilon,\delta}$, the Hessian of $f_{\epsilon,\delta}$,  
satisfies $H_{\epsilon,\delta}^{-1} \preceq \bl I_n$ as $m_{\epsilon,\delta}$ is convex. 
Now $\text{Var}_{\epsilon,\delta}[ \bl u^\top \bs Z]$ depends on the first and second 
moments of $\bs Z$ under this density, which can be computed from the unnormalized expectations 
\[
  \int (\bl u^\top \bl z)^l  e^{ -f_{\epsilon,\delta} (\bl z) } \, d\bl z  
\]
for  $l\in \{0,1,2\}$. 
These integrals are dominated by the corresponding integrals with respect to $e^{-||\bl z- \bs\mu||^2/2} $, and 
the integrands converge pointwise to $(\bl u^\top \bl z)^l e^{-f_{\epsilon}(\bl z) }$, and 
so by the dominated convergence theorem (DCT), they converge to the integrals of 
their pointwise limits. Furthermore, the normalizing constant  converges to $\int e^{-f_\epsilon(\bl z)}\, d\bl z  \geq \int_{\bar S} e^{-||\bl z-\bs\mu||^2/2} \, d\bl z >0$ by the DCT, and so is bounded away from zero for 
sufficiently small $\delta$. As a result, 
$\text{Var}_{\epsilon,\delta}[ \bl u^\top \bs Z]   \rightarrow \text{Var}_{\epsilon}[ \bl u^\top \bs Z]$ 
as $\delta\rightarrow 0$, where the latter variance is with respect to the distribution with density 
$p_{\epsilon}(\bl z) \propto e^{-f_\epsilon(\bl z)}$. Thus we have $\text{Var}_{\epsilon}[ \bl u^\top \bs Z] \leq 1$ 
for all unit vectors $\bl u$.
Finally, we apply exactly the same DCT argument to this variance under $p_\epsilon$ as $\epsilon\rightarrow 0$ to
obtain $\Var{ \bl u^\top \bs Z | \bs Z \in \bar S} = \Var{ \bl u^\top \bs Z | \bs Z \in S}\leq 1$, 
since the boundary of $S$ has measure zero. This gives the result. We note that the same proof can be applied to the case that $\bs Z\sim N_n( \bs\mu, \Sigma)$ for arbitrary nonsingular covariance matrices $\Sigma$,  giving 
$\Var{ \bs Z|\bs Z\in S} \preceq \Sigma$. 
\end{proof}

\begin{proof}[Proof of Theorem \ref{thm:effZero}]
Let $\bs R$ and $\bs Z_{()}$  be the vector of ranks and order statistics of $\bs Z$, 
respectively.  
In the case that $\bs\beta=0$, the elements of $\bs Z$ are i.i.d.\ and hence $\bs R$ and $\bs Z_{()}$ 
are independent. Additionally, since $G$ is strictly increasing, the conditional distributions 
$\bs Z | \bs Z \in S$ and $\bs Z|\bs R$ are equivalent. Hence 
\[
 \Var{\bs Z | \bs Z \in S } = \Var{ \bs Z | \bs R } = \bl P \Var{\bs Z_{()}} \bl P^\top  \equiv \bl V, 
\]
where $\bl P$ is the matrix that permutes the indices of the order statistics to their observed orders, which 
is a function of $\bs R$. Note that $\Var{\bs Z_{()}}$ is deterministic while $\bl P$ depends on the ranks 
of $\bs Z$, which are equal to the ranks of $\bs Y$. 

Now let $\bar Z= \bl 1^\top \bs Z/n$ and $\bl e= \bs Z - \bar Z \bl 1$. Note that $\bar Z$ and $\bl e$ are independent, and that $\bs R$ is a function of $\bl e$, and so $\bl V =  \bl 1\bl 1^\top/n + \Var{\bl e| \bs R}$. Defining 
$\bl W = \bl V - \bl 1 \bl 1^\top /n = \Var{\bl e| \bs R}$, we have that $\bl W$ is a positive semidefinite 
matrix with 
 $\bl W \bl 1 = \Cov{ \bl e, \bl e^\top \bl 1 | \bl R } = \bl 0$ and 
 $\tr(\bl W )= \tr(\bl V ) -1$ being nonnegative and deterministic. 
The moments of the elements of $\bl V$ and $\bl W$ can be computed as follows: Since 
$V_{i,i} = \Var{ Z_{(R_i)}}$, $V_{i,j} = \Cov{ Z_{(R_i)}, Z_{(R_j)} }$  and 
$R_i$ is uniform on $\{1,\ldots, n\}$, we have 
\begin{align*}  \Exp{ V_{i,i} } & = \sum_{k=1}^n \Var{ Z_{(k)}}/n = \tr(\bl V) /n  \\
\Exp{ V_{i,j}} & = ( \bl 1^\top \Var{\bs Z_{()} }\bl 1  - \tr(\Var{\bs Z_{()} } ) )/(n(n-1)) = (n-\tr(\bl V))/(n(n-1)),
\end{align*} 
giving $\Exp{W_{i,i}} = \tr(\bl W)/n$  and $\Exp{ W_{i,j} }= -\tr(\bl W)/(n(n-1))$. 
Additionally, we have 
\begin{align*} 
\tr(\bl V) & = \sum_{i=1}^n \left(  \Exp{ Z_{(i)}^2}  - \Exp{ Z_{(i)} }^2 \right )  \\
     & = \sum_{i=1}^n \Exp{ Z_i^2} - \sum_{i=1}^n \Exp{Z_{(i)}}^2 = n - \sum_{i=1}^n  \Exp{ Z_{(i)} }^2. 
\end{align*} 
We now show $\tr(\bl V) /n \rightarrow 0$ and consequently $\tr(\bl W)/n \rightarrow 0$. 
Let $Q_n(t)$ be the empirical quantile function of $\bs Z$, so $Q_n(t) = Z_{(i)}$ for $t\in ((i-1)/n,i/n]$ and
 $\int_0^1 \Exp{Q_n(t)}^2 \, dt = \sum_{i=1}^n \Exp{ Z_{(i) }}^2/n$.  
We have 
\[ \int_0^1 ( \Exp{Q_n(t)} - \Phi^{-1}(t) )^2 \, dt \leq 
    \Exp{ \int_0^1 (Q_n(t) - \Phi^{-1}(t))^2   \,  dt }  \rightarrow 0 
\]  
as $n\rightarrow \infty$ from Jensen's inequality and Theorem 5.1 of \citet{bobkov_ledoux_2019}, so 
$\Exp{ Q_n } \rightarrow \Phi^{-1}$ in $L_2(0,1)$ as $n\rightarrow \infty$. This implies
\[
  \left | \int_0^1 \Exp{Q_n(t) }^2  \, dt -  \int_0^1 ( \Phi^{-1}(t))^2 \, dt \right | = 
  \left |\sum_{i=1}^n \Exp{ Z_{(i)}}^2/n - 1 \right | \rightarrow 0 
\]
as $n\rightarrow \infty$, which gives the result. 
Finally, since $\bl X$ and $\bl W$ are independent in the case that $\bs\beta=0$, we have 
\[ \Exp{ \tr (\bl X^\top \bl W \bl X) /n } =\left( \sum_{j=1}^p \Var{x_{1,j}} \right )\times \tr(\bl W)/n \rightarrow 0\] 
as $n\rightarrow \infty$. Because $\bl X^\top \bl W \bl X$ is nonnegative definite, 
we also have 
$\tr (\bl X^\top \bl W \bl X) /n$ converging in probability to zero by Markov's inequality. Every entry of $\bl X^\top \bl W \bl X/n$ 
is bounded by the trace, and so 
 $\bl X^\top ( \Var{\bs Z | \bs Z\in S } - \bl 1\bl 1^\top /n ) \bl X/n$ converges in probability to zero 
elementwise. 

\end{proof} 

\newcommand{\zgq}{\ensuremath{ Z_{(\lfloor \gamma n \rfloor ) } }} 

\begin{proof}[Proof of Lemma \ref{lem:lipschitz}]
Because $Z$ and $\bs Z$ are independent, the function $H(\gamma,\bl b)$ can be expressed 
\begin{align*}
H(\gamma,\bl b) & = \Exp{ \Phi( Z_{(\lfloor \gamma n  \rfloor )} - \bl x^\top \bl b )  |\bl b } \\
   & = \Exp{ \Phi( (\bl X\bl b + \bs\epsilon )_{(\lfloor \gamma n  \rfloor )} - \bl x^\top \bl b )   }     \equiv  \Exp{ \Phi( h(\bl b, \bs\epsilon)) }, 
\end{align*}
where $\Phi$ is the standard normal CDF and the latter 
expectation is over $\bs\epsilon \sim N_n(\bl 0, \bl I_n)$.
Therefore 
\begin{align*}
|H(\gamma,\bs \beta)  - H(\gamma , \tilde {\bs\beta} ) |   &= 
 | \Exp{ \Phi( h(\bs \beta,\bs\epsilon)) - \Phi(h(\tilde{\bs \beta} , \bs \epsilon ) ) }| \\
 &  \leq  \Exp{ | \Phi( h(\bs \beta,\bs\epsilon)) - \Phi(h(\tilde{\bs \beta} , \bs \epsilon ) ) | } \\
\end{align*} 
Since $\Phi$ is Lipschitz with constant $(2\pi)^{-1/2}$, we have 
$| \Phi( h(\bs \beta,\bs\epsilon)) - \Phi(h(\tilde{\bs \beta} , \bs \epsilon ) ) | \leq 
(2\pi)^{-1/2} | h(\bs \beta,\bs\epsilon)  - h(\tilde{ \bs \beta},\bs\epsilon) |$. 
Now recall that for vectors $\bl u$ and $\bl v$, 
$\max_k |u_{(k)} - v_{(k)}| \leq  ||\bl u - \bl v ||_\infty$, from which it follows that 
\begin{align*} 
| h(\bs \beta,\bs\epsilon)  - h(\tilde{ \bs \beta},\bs\epsilon) | 
& \leq || (\bl X\bs\beta + \bs\epsilon ) - (\bl X\tilde{\bs\beta} + \bs\epsilon ) ||_{\infty} + 
   |\bl x^\top(\bs\beta-\tilde{\bs\beta} ) | \\
 & = || \bl X(\bs\beta -\tilde{\bs\beta}) ||_{\infty} + 
   |\bl x^\top(\bs\beta-\tilde{\bs\beta} ) |  
\end{align*}
which does not depend on $\bs\epsilon$. This gives the result. 
\end{proof}

\begin{proof}[Proof of Corollary \ref{cor:rootnH}] 
From Lemma \ref{lem:lipschitz}, 
$|| \bl X(\bs\beta -\hat{\bs\beta}) ||_{\infty}  \leq \max \{ ||\bl x_1||,\ldots, 
 ||\bl x_n|| \}  \times ||\bs\beta - \hat{\bs \beta} || = O_p(n^{-1/2})$ under  
the  assumption on $\bl x_1,\ldots, \bl x_n$. 
The second term is 
$ |\bl x^\top(\bs\beta-\hat{\bs\beta} ) |   \leq ||\bl x|| ||\bs\beta - \hat{\bs\beta} || = O_p(n^{-1/2})$.  Since the Lipschitz bound from Lemma \ref{lem:lipschitz} does not depend on $\gamma$, the 
supremum of  $|H(\gamma,\bs\beta) - H(\gamma,\hat{\bs\beta}) |$ over $\gamma$ satisfies the same rate. 
\end{proof}

\begin{proof}[Proof of Theorem \ref{thm:hconv}] 
Write 
\begin{align*} 
H(\hat\gamma_u, \bs\beta )- H(\hat\gamma_l, \bs\beta )  = 
H(\hat\gamma_u, \hat{\bs\beta} )- H(\hat\gamma_l, \hat{\bs\beta} )  &  +
( H(\hat\gamma_u, \bs\beta )- H(\hat\gamma_u, \hat {\bs\beta }) )      \\
  & - ( H(\hat\gamma_l, \bs\beta )- H(\hat\gamma_l, \hat {\bs\beta }) ).   
\end{align*}
Under the assumptions of the theorem, the difference of the first two terms on the right side 
converges in probability to $1-\alpha$, while the two remaining terms in parentheses converge  
in probability to zero because the convergence of $H(\gamma,\hat{\bs\beta})$ to  
$H(\gamma,{\bs\beta})$ is uniform in $\gamma$. 
\end{proof}

\begin{proof}[Proof of Theorem \ref{thm:cconv}]
The coverage probability can be written 
\begin{align*}
\Pr( Z_{(\lfloor \hat \gamma_l n \rfloor )} < Z < 
 Z_{(\lceil \hat \gamma_u n \rceil )} | \bs\beta , \bl x)     
 & = \Exp{ \Phi(Z_{(\lceil \hat\gamma_u n\rceil)} - \bl x^\top \bs\beta ) - \Phi(Z_{(\lfloor \hat\gamma_l n\rfloor )}   - \bl x^\top\bs\beta)
  }. 
\end{align*}  
Consider the first term on the right side. Letting $r(\bs Z)$ denote the ranks 
of $Z_1,\ldots, Z_n$, 
\begin{align*} 
\Exp{ \Phi(Z_{(\lceil \hat\gamma_u n\rceil)} - \bl x^\top \bs\beta )  | \bs\beta} & = 
\Exp{ \Exp{  \Phi(Z_{(\lceil \hat\gamma_u n\rceil)} - \bl x^\top \bs\beta ) |\bs\beta, r(\bs Z)  } | \bs\beta }  \\ 
& = \Exp{ H( \hat \gamma_u,\bs\beta ) | \bs\beta } 
\end{align*} 
because $\hat\gamma_{u}$ is rank-measurable, $Z_{(\lceil \hat\gamma_u n\rceil)}$ is 
an order statistic, and 
 for an i.i.d.\ sample of continuous random variables   the ranks and order statistics  
are independent. 
The same argument applies to the second term on the right side of the equation. 
Now $H( \hat \gamma_u,\bs\beta )  -
  H( \hat \gamma_l,\bs\beta )$ is bounded  and 
by Theorem \ref{thm:hconv} converges in probability to $1-\alpha$, so the
expected difference converges to $1-\alpha$ as well. 

\end{proof}

\begin{proof}[Proof of Theorem \ref{thm:rmle-sqrt-n}]
As described before the statement of the proof, the ERL 
may be written as 
\begin{align*} 
L(\bs\beta: S(\bl y)) & = 
 \int 
 \left\{
\prod_{i=1}^n
  \Phi(\alpha-\bl{x}_i^{\top}\bs{\beta})^{1-y_i} (1-\Phi(\alpha-\bl{x}_i^\top\bs{\beta}))^{y_i } \right \}\left  (\sum_{k\in N_0} \frac{  \phi(\alpha-\bl{x}_k^\top\bs{\beta})}{\Phi(\alpha-\bl x_k^\top \bs\beta )} \right )   \, d\alpha
\\ 
& \equiv \int L( \bs\theta : \bl y) w(\bs\theta ) \, d\alpha
\end{align*} 
where $\bs\theta= (\alpha,\bs\beta)$, and $L(\bs\theta:\bl y)$ is the usual probit regression likelihood function 
for $\bs\theta$ given the data vector $\bl y$. 
In what follows, candidate parameter values are expressed as $\bs\theta= (\alpha,\bs\beta)$ and 
the true values that generate the data are expressed as $\bs\theta_0 = (\alpha_0,\bs\beta_0)$. 

Given a prior distribution over $\bs \beta$ with density $\pi_\beta(\bs\beta)$, define the PERLE
 $\hat {\bs\beta}$ as the expectation of $\bs\beta$ under the probability distribution with 
density proportional to $\pi_\beta (\bs\beta ) L(\bs\beta:S(\bl y))$. 
It is easily shown that $\hat{\bs\beta}$ is the $\bs\beta$-component of the posterior mean estimator 
$\hat{\bs\theta}$ of $\bs\theta$, given by 
\[ 
\hat{\bs\theta } = \frac{ \int \bs\theta L(\bs\theta:\bl y ) \pi(\bs\theta)\, d\bs\theta }{ 
                          \int  L(\bs\theta:\bl y ) \pi(\bs\theta)\, d\bs\theta }, 
\]
where $\pi (\bs\theta) \propto \pi_\beta(\bs\beta) w(\bs\theta)$ is interpreted as pseudo-prior distribution, keeping in 
mind that $\pi$ depends on $\bl y$ through $w(\bs\theta)$. 
Letting $\tilde{\bs \theta}$ be the maximizer of $L(\bs\theta:\bl y)$, 
 we will show that $\sqrt{n}( \hat{\bs\theta} - \tilde {\bs\theta}) = o_p(1)$. 
Furthermore, under the assumptions  of the theorem, $\tilde{\bs\theta}$ satisfies $\sqrt{n}(\tilde{\bs\theta} - \bs\theta_0) \stackrel{d}{\rightarrow} 
N_{p+1}( \bl 0 , I(\bs\theta_0)^{-1})$, where 
$I(\bs\theta_0)$ is the Fisher information of the probit regression
model with expectation taken jointly over the distribution of $(\bl x,Y)$ 
\citep{fahrmeir_kaufmann_1986}. Together with the first result, this implies that $\sqrt{n}(\hat{\bs\beta} - \bs\beta_0) \stackrel{d}{\rightarrow} N_p(0,(I(\bs\theta_0)^{-1})_{\beta\beta})$. 

Let $l(\bs\theta) = \log L(\bs\theta:\bl y)$ be the log likelihood, and define 
$\Delta(\bl s) = l(\tilde {\bs\theta} + \bl s/\sqrt{n} ) - l(\tilde{\bs\theta})$. Then  
\begin{equation}
\label{eqn:thetaHat}
\sqrt{n} (\hat{\bs\theta} - \tilde{\bs\theta})  = 
 \frac{  \int \bl s e^{\Delta(\bl s)} \rho(\bl s) \, d\bl s } 
      {   \int      e^{\Delta(\bl s)} \rho(\bl s) \, d\bl s },  
\end{equation} 
where $\rho(\bl s) = \pi(\tilde{\bs\theta} + \bl s/\sqrt{n})/\pi(\tilde{\bs\theta})$.
Note that $\rho$ is positive and $\rho(\bl 0) = 1$.  We will 
show that $e^{\Delta(\bl s)} \rho(\bl s)$ converges to a centered Gaussian kernel so that 
the numerator of (\ref{eqn:thetaHat}) converges  to the first moment of a symmetric function, which is zero, 
while the denominator remains bounded away from zero. 
To do this we make use of two lemmas regarding the asymptotic behavior of $e^\Delta$ and $\rho$.
Starting with the first term, via a Taylor series expansion we show that
 $\Delta(\bl s)$ is close to $-\bl s^\top I(\bs\theta_0)\bl s/2$. 
Specifically, we have the following results, which are proven after the theorem:

\begin{lemma}
\label{lem:delta}  
Let  $\lambda_0$ be
the smallest eigenvalue of $I(\bs \theta_0)$.  
For each fixed $T>0$, 
\begin{itemize} 
\item $\sup_{\bl s : \|\bl s\|\leq T }|\Delta(\bl s) +  \bl s^\top I(\bs\theta_0)\bl s/2| \stackrel{p}{\rightarrow}  0$;
\item $\ \Pr\big( \Delta(\bl s) \leq - \lambda_0  T\|\bl s\|/8 \text{ for all } \|\bl s\| \geq T \big) \to 1$. 
\end{itemize}
\end{lemma}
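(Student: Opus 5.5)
Both parts rest on a second-order Taylor expansion of the probit log-likelihood around the MLE $\tilde{\bs\theta}$. Since $\tilde{\bs\theta}$ maximizes $l$, we have $\nabla l(\tilde{\bs\theta})=\bl 0$. This holds with probability tending to one, because the MLE exists and is consistent under the stated moment and nonsingularity conditions \citep{fahrmeir_kaufmann_1986}. Hence
\[
\Delta(\bl s)= \tfrac{1}{2n}\,\bl s^\top \nabla^2 l(\bar{\bs\theta}_{\bl s})\,\bl s
\]
for some $\bar{\bs\theta}_{\bl s}$ on the segment from $\tilde{\bs\theta}$ to $\tilde{\bs\theta}+\bl s/\sqrt n$.

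\textbf{First part.} Write $\bl w_i=(1,-\bl x_i)$. The Hessian has the form $\sum_i h_i(\bs\theta)\bl w_i\bl w_i^\top$, where $h_i$ is a smooth function of $\bl w_i^\top\bs\theta$ and $y_i$, and both $h_i$ and its derivative are bounded by polynomials in $|\bl w_i^\top\bs\theta|$. In particular, log-concavity of $\Phi$ and $1-\Phi$ gives $-h_i\le 1$, and the third derivative grows at most linearly.

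\begin{enumerate}
\item For $\|\bl s\|\le T$ we have $\|\bar{\bs\theta}_{\bl s}-\bs\theta_0\|\le \|\tilde{\bs\theta}-\bs\theta_0\|+T/\sqrt n=o_p(1)$.
\item A mean-value bound controls the third derivatives:
\[
\Big\|\tfrac1n\nabla^2 l(\bs\theta)-\tfrac1n\nabla^2 l(\bs\theta_0)\Big\| \le \|\bs\theta-\bs\theta_0\|\cdot \tfrac1n\sum_i C\,(1+\|\bl w_i\|)^{4}
\]
uniformly on a fixed neighborhood of $\bs\theta_0$. This is where $\Exp{\|\bl x_i\|^6}<\infty$ is used, possibly more than needed.
\item The law of large numbers gives $-\tfrac1n\nabla^2 l(\bs\theta_0)\to I(\bs\theta_0)$.
\end{enumerate}

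Together these give $\sup_{\|\bl s\|\le T}|\Delta(\bl s)+\bl s^\top I(\bs\theta_0)\bl s/2|\le T^2\, o_p(1)$.

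\textbf{Second part.} Use concavity of $l$, which holds because the probit log-likelihood is concave. Along each ray, $t\mapsto\Delta(t\bl u)$ is concave with $\Delta(\bl 0)=0$. So for $\|\bl s\|\ge T$, writing $\bl u=\bl s/\|\bl s\|$,
\[
\Delta(\bl s)\le \frac{\|\bl s\|}{T}\,\Delta(T\bl u).
\]
By the first part, uniformly over unit vectors $\bl u$,
\[
\Delta(T\bl u)\le -\lambda_0T^2/2+o_p(1).
\]
With probability tending to one this is at most $-\lambda_0T^2/8$, which gives $\Delta(\bl s)\le -\lambda_0 T\|\bl s\|/8$ for all $\|\bl s\|\ge T$ simultaneously.

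The main obstacle is the uniform Hessian control in the first part. Concavity makes the tail bound essentially free, and the expansion itself is routine. The real work is showing that the empirical average of the Hessian is equicontinuous near $\bs\theta_0$. I would handle this by dominating $|\partial h_i/\partial(\bl w_i^\top\bs\theta)|$ by $C(1+|\bl w_i^\top\bs\theta|)$, using standard Mills-ratio bounds for $\phi/\Phi$ and its derivatives, and then applying the law of large numbers to $\|\bl w_i\|^{4}$.
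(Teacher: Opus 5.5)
Your proposal is correct and takes essentially the paper's approach. The first item uses a Taylor expansion about the MLE, controlled by Mills-ratio derivative bounds and the law of large numbers, and the second uses the same concavity-along-rays inequality $\Delta(\bl s)\le(\|\bl s\|/T)\sup_{\|\bl v\|=T}\Delta(\bl v)$.

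The paper uses a third-order expansion and then simply asserts $-\ddot l(\tilde{\bs\theta})/n \to I(\bs\theta_0)$; your Lagrange-form expansion with the Lipschitz bound on the Hessian near $\bs\theta_0$ actually justifies that step. One small nit: $-h_i\le 1$ follows from the hazard-rate inequality $\psi'>-1$, not from log-concavity, which only gives $h_i\le 0$.
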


Regarding $\rho$, it is asymptotically flat where the likelihood concentrates, and grows only polynomially elsewhere:
\begin{lemma}
\label{lem:rho}
For each fixed $T>0$,
\begin{itemize}
\item $\ \sup_{\bl s : \| \bl s \|\leq T }|\log\rho(\bl s)| = O_p(T/\sqrt n)$, and in particular
$\sup_{\bl s : \|\bl s\|\leq T }|\rho(\bl s) - 1| \stackrel{p}{\to} 0$.
\item There exist $C_n = O_p(1)$, $\varepsilon_n = O_p(n^{-1/2})$, and events $G_n$ with $\Pr(G_n) \to 1$, such that on
$G_n$, $\ \rho(\bl s) \leq C_n(1+\|\bl s\|)e^{\varepsilon_n\|\bl s\|}$ for all $\bl s \in \mathbb R^{p+1}$.
\end{itemize}
\end{lemma}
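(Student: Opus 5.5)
We prove the two parts of the lemma separately, working with $\log\rho(\bl s) = \log\pi_\beta(\tilde{\bs\beta}+\bl s_\beta/\sqrt n) - \log\pi_\beta(\tilde{\bs\beta}) + \log w(\tilde{\bs\theta}+\bl s/\sqrt n) - \log w(\tilde{\bs\theta})$, where $\bl s=(s_\alpha,\bl s_\beta)$.

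\textbf{The prior term.} Since $\pi_\beta$ is a nonsingular Gaussian, $\log\pi_\beta$ is quadratic. Its increment is therefore $-\bl s_\beta^\top\Sigma^{-1}(\tilde{\bs\beta}-\bl m)/\sqrt n - \bl s_\beta^\top\Sigma^{-1}\bl s_\beta/(2n)$. Because $\tilde{\bs\theta}=O_p(1)$ by consistency of the probit MLE, this is $O_p(T/\sqrt n)$ uniformly on $\|\bl s\|\le T$. It is also bounded above by $\varepsilon_n\|\bl s\|$ for all $\bl s$, with $\varepsilon_n=\|\Sigma^{-1}(\tilde{\bs\beta}-\bl m)\|/\sqrt n=O_p(n^{-1/2})$, since the quadratic part is nonpositive.

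\textbf{The weight term, local part.} Write $w(\bs\theta)=\sum_{k\in N_0} m(\alpha-\bl x_k^\top\bs\beta)$ with $m=\phi/\Phi$ the inverse Mills ratio at $-t$. The function $m$ is positive, decreasing, and log-concave. Its derivative satisfies $|(\log m)'(t)| = |t + m(t)|$, which is at most $C(1+|t|)$; for $t\to-\infty$ one has $m(t)\approx -t$, so $t+m(t)\to 0$.

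Hence each summand satisfies
\[
|\log m(t+\delta)-\log m(t)| \le C|\delta|(1+|t|+|\delta|).
\]
Since $\log w$ is the log of a sum, its increment is bounded by the maximum over $k$ of the increments of the summands. With $t_k=\tilde\alpha-\bl x_k^\top\tilde{\bs\beta}$ and $\delta_k=(s_\alpha-\bl x_k^\top\bl s_\beta)/\sqrt n$, we get $|\delta_k|\le T(1+\|\bl x_k\|)/\sqrt n$.

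The difficulty is that $\max_k\|\bl x_k\|$ need not be $O(1)$. I would avoid taking the maximum by using a weighted average instead: $\log w(\bs\theta+\bl h)-\log w(\bs\theta)$ is a log of a weighted average of ratios $m(t_k+\delta_k)/m(t_k)$, with weights $m(t_k)/w$. Taking the derivative along the segment, $\frac{d}{du}\log w$ equals a weighted average of $(\log m)'(t_k+u\delta_k)\,\delta_k$. This is bounded by $(T/\sqrt n)\cdot \sum_k \omega_k C(1+|t_k|+\dots)(1+\|\bl x_k\|)$.

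Using $m(t)\le C(1+|t|)$ and $m(t)\ge c\,e^{-t^2}$, the weighted average is controlled by sums of the form $\sum_{k\in N_0} (1+\|\bl x_k\|)^3$ divided by $\sum_k m(t_k)\gtrsim |N_0|$. By the LLN and the moment condition $\Exp{\|\bl x_i\|^6}<\infty$, this ratio is $O_p(1)$; this is presumably where the sixth moment is needed. This gives the first bullet, and $|\rho-1|\to0$ follows by exponentiating.

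\textbf{The weight term, global part.} For the growth bound I would use the same derivative bound globally. Along the ray $u\bl s/\sqrt n$, $0\le u\le1$, the derivative of $\log w$ is at most $C\|\bl s\|/\sqrt n$ times a weighted average of $(1+|t_k+u\delta_k|)(1+\|\bl x_k\|)$. This is at most $A_n\|\bl s\|/\sqrt n + B_n\|\bl s\|^2/n$ with $A_n,B_n=O_p(1)$.

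The quadratic term is the obstacle. Log-concavity of $m$ helps here: $\log m$ is concave, so $\log m(t+\delta)\le\log m(t)+(\log m)'(t)\delta$. This gives
\[
w(\bs\theta+\bl h)\le \sum_k m(t_k)e^{(\log m)'(t_k)\delta_k}.
\]
Moreover $(\log m)'\le0$ and $(\log m)'(t)\ge -C(1+|t|)$, so each exponent is at most $C(1+|t_k|)(1+\|\bl x_k\|)\|\bl s\|/\sqrt n$.

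I would then restrict to the event $G_n$ on which $\max_{k}(1+|t_k|)(1+\|\bl x_k\|)\le n^{1/2}\cdot o(1)$ fails to hold only with vanishing probability; this event is available by the $L^6$ moments. The contributions of the remaining rare large-$\|\bl x_k\|$ indices would be handled through the factor $(1+\|\bl s\|)$, which gives $\rho\le C_n(1+\|\bl s\|)e^{\varepsilon_n\|\bl s\|}$. This last bookkeeping step, controlling indices with large $\|\bl x_k\|$ uniformly in $\bl s$ using only the allowed polynomial prefactor, is where I expect most of the work.
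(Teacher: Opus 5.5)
\textbf{The prior factor and the first bullet.} Your handling of the prior factor matches the paper. Your local argument for the first bullet can be made to work, with two remarks.

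First, the paper's version is simpler. It uses $\nabla\log w=\sum_{k\in N_0}\psi'(\eta_k)\bl u_k/w$ together with $|\psi'|\le 1$, so only $\sum_k(1+\|\bl x_k\|)/n$ enters the numerator.

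Second, along the segment the weights and the denominator are evaluated at $\tilde{\bs\theta}+u\bl s/\sqrt n$, not at $\tilde{\bs\theta}$. The bound $w\gtrsim n$ therefore has to hold uniformly along the segment, not just at $\tilde{\bs\theta}$. The paper gets this from a uniform LLN on a compact ball $K$ about $\bs\theta_0$, together with the event $\tilde{\bs\theta}\in K_{1/2}$.

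\textbf{The gap is in the second bullet.} The tangent-line inequality from log-concavity, $m(t+\delta)\le m(t)e^{(\log m)'(t)\delta}$, replaces a function that grows only linearly by an exponential. Since the weights $m(t_k)/w(\tilde{\bs\theta})$ sum to one, the best it gives is $\rho_w(\bl s)\le \exp(\varepsilon_n'\|\bl s\|)$ for all $\bl s$, where $\varepsilon_n'=\max_{k\in N_0}(1+|t_k|)(1+\|\bl x_k\|)/\sqrt n$. This is controlled only by $(1+\|\tilde{\bs\theta}\|)\max_k(1+\|\bl x_k\|)^2/\sqrt n$, which under six moments is $o_p(n^{-1/6})$ but not $O_p(n^{-1/2})$.

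A polynomial prefactor cannot repair this. For fixed data, your upper bound $\sum_k\omega_k e^{c_k\|\bl s\|/\sqrt n}$ grows at exponential rate $\max_k c_k/\sqrt n$ as $\|\bl s\|\to\infty$. So any bound of the form $C_n(1+\|\bl s\|)e^{\varepsilon_n\|\bl s\|}$ valid for all $\bl s$ derived from it forces $\varepsilon_n\ge\max_k c_k/\sqrt n$.

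This weaker bound would actually suffice for how the lemma is used in Theorem~\ref{thm:rmle-sqrt-n}, which only needs $\varepsilon_n<\lambda_0T/16$ with probability tending to one. It does not, however, prove the lemma as stated. Your closing step, handling large-$\|\bl x_k\|$ indices ``through the factor $(1+\|\bl s\|)$'', is exactly the part that is missing.

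\textbf{The missing idea.} Do not differentiate globally at all; bound the numerator of $\rho_w$ directly using the linear growth $\psi(t)\le 1+|t|$. This gives $w(\bs\theta)\le(1+\|\bs\theta\|)\sum_{k\in N_0}(1+\|\bl u_k\|)=na_n(1+\|\bs\theta\|)$ with $a_n=O_p(1)$. Hence, for every $\bl s$,
\[
w(\tilde{\bs\theta}+\bl s/\sqrt n)\le na_n(1+\|\tilde{\bs\theta}\|)(1+\|\bl s\|).
\]
On $G_n$ you also have the lower bound $w(\tilde{\bs\theta})\ge n\inf_K\bar w/2$, which you essentially need for the first bullet anyway. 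Together these give $\rho_w(\bl s)\le C_n(1+\|\bl s\|)$ with no exponential factor. The $e^{\varepsilon_n\|\bl s\|}$ then comes solely from the Gaussian prior, with $\varepsilon_n=\|\Sigma^{-1}(\tilde{\bs\beta}-\bs\mu)\|/\sqrt n=O_p(n^{-1/2})$.

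You already state $m(t)\le C(1+|t|)$ in your local argument. Applying it directly to $w(\tilde{\bs\theta}+\bl s/\sqrt n)$ is what closes the second bullet.
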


We use the first items of each lemma  to control the numerator and denominator of 
(\ref{eqn:thetaHat})  on the set $\mathcal S_T = \{ \bl s : \|\bl s\|\leq T \}$. 
Note that both $\Delta(\bl s)$ and $-\bl s^\top I(\bs\theta_0) \bl s/2$ are less than or equal to zero, 
and so 
\begin{align}
\label{eqn:prodbd}
  |e^{\Delta(\bl s)}\rho(\bl s) - e^{-\frac12 \bl s^\top I(\bs\theta_0)\bl s}|
  \ & \leq  e^{\Delta(\bl s)}\big|\rho(\bl s) - 1\big| + |e^{\Delta(\bl s)} - e^{-\frac12\bl s^\top I(\bs\theta_0)\bl s}|  \nonumber \\
  \ & \leq \big|\rho(\bl s) - 1\big| + |\Delta(\bl s) + \tfrac12\bl s^\top I(\bs\theta_0)\bl s|
\end{align}
for every $\bl s \in \mathbb R^{p+1}$.  
Since $\mathcal S_T$ is bounded,  (\ref{eqn:prodbd}) and the first items of the two lemmas give uniform convergence of $e^{\Delta(\bl s)} \rho (\bl s) $ to $e^{-\bl s^\top I(\bs\theta_0)\bl s/2}$ on $\mathcal S_T$,  and so we have
\begin{align*}
  \int_{\mathcal S_T}\bl s\, e^{\Delta(\bl s)}\rho(\bl s)\, d\bl s & \stackrel{p}{\rightarrow}  \int_{\mathcal S_T}\bl s\, e^{-\bl s^\top I(\bs\theta_0)\bl s/2}\, d\bl s \ =\ \bl 0 \text{ and }  \\
  \int_{\mathcal S_T} e^{\Delta(\bl s)}\rho(\bl s)\, d\bl s \  &  \stackrel{p}{\rightarrow}  \int_{\mathcal S_T} e^{-\bl s^\top I(\bs\theta_0) \bl s/2}\, d\bl s \equiv c_T, 
\end{align*}
the first limit vanishing because $\mathcal S_T$ is symmetric about the origin and
$\bl se^{-\bl s^\top I(\bs\theta_0) \bl s/2}$ is an odd function of $\bl s$. 
Note that the constant $c_T$ is strictly increasing in $T$, the integrand being positive and the domains nested, so 
for $T>1$ we have
$c_T >  c_1 > 0$. Furthermore, note that the second limit provides the desired bound 
on the denominator of (\ref{eqn:thetaHat}), as the integral over $\mathbb R^{p+1}$ will be larger than that
over just $\mathcal S_T$. Thus for $T>1$ the denominator converges in probability to something greater than or equal to $c_1$, which is greater than zero. 

What remains is to control the numerator of (\ref{eqn:thetaHat}) on the complement of $\mathcal S_T$. 
Let $E_n(T)$ be the event 
that $\Delta(\bl s) \leq - \lambda_0 T\|\bl s\| /8 \text{ for all } \|\bl s\| > T$, 
let $F_n(T)$ be the event that $\varepsilon_n < \lambda_0 T /16$, and let $G_n$ be as described in 
Lemma \ref{lem:rho}. The probabilities of these events all converge to 1, and so the probability of their 
intersection 
$H_n(T) = E_n(T) \cap F_n(T) \cap G_n$  also converges to 1. 
Therefore, on the event $H_n(T)$, we have 
\begin{align*} 
  \| \bl s \| e^{\Delta(\bl s)}\rho(\bl s)
  & \leq  \|\bl s \| e^{- \lambda_0 T\|\bl s \|/8}\cdot C_n(1+\|\bl s\|)e^{\varepsilon_n\|\bl s\|}  \\
  &  \leq   C_n( \|\bl s \| +\|\bl s\|^2)e^{-\lambda_0 T\|\bl s\|/16} \\
  &  \leq   C_n(1+\|\bl s\|)^2  e^{-\lambda_0 T\|\bl s\|/16}. 
\end{align*} 
Integrating, we have
\[
  \int_{\|\bl s\|> T}\|\bl s\|\,e^{\Delta(\bl s)}\rho(\bl s)\,d\bl s
  \ \leq  C_n 
   \int_{\|\bl s\|> T}(1+\|\bl s\|)^2 e^{-\lambda_0 T\|\bl s\|/16 }\,d\bl s  
\equiv C_n \eta(T), 
\]
where $\eta(T)$ is deterministic, finite for each $T$, and satisfies $\eta(T) \to 0$ as $T \to \infty$.

Now pick any $\epsilon>0$ and $\delta>0$. 
From Lemma \ref{lem:rho}, there exists $M>0$ and $n_1$ such that for $n>n_1$,
$\Pr( C_n > M )\leq \delta/4$, and a 
 $T>1$  such that $M \eta(T)<\epsilon/2$, so that 
 $U_n(T) \equiv \int_{\|\bl s\|> T} \|\bl s\| e^{\Delta(\bl s)} \rho(\bl s) \, d\bl s\leq M \eta(T) <\epsilon/2$
on the event $H_n(T) \cap \{ C_n \leq M\}$. 
We then have
\begin{align*}  
\Pr( U_n(T) > \epsilon/2)&\leq\Pr( \{ U_n(T) > \epsilon/2 \} \cap H_n(T) \cap \{ C_n \leq M\} )+ \\
     &  \qquad \Pr(H_n(T)^c) + \Pr( C_n>M) \\
& \leq   \Pr(\emptyset ) + \delta/4 + \delta/4    \\
 & = 0 + \delta/4 + \delta/4   = \delta/2  
\end{align*} 
for $n>n_1 \vee n_2$, where $n_2$ is chosen so that $\Pr( H_n(T) ) >1-\delta/4$ for $n >n_2$. 
Let $W_n$ be the norm of the  numerator in (\ref{eqn:thetaHat}) and 
$V_n(T)  = 
| \int_{\mathcal S_T}  \bl s e^{\Delta(\bl s) } \rho(\bl s) \, d \bl s|$, which was previously shown to be $o_p(1)$. 
By the triangle inequality we have $W_n \leq  U_n(T) + V_n(T)$, giving
\begin{align*} 
\Pr( W_n>\epsilon ) &\leq  \Pr(  U_n(T) + V_n(T) > \epsilon ) \\
 & \leq \Pr( U_n(T)>\epsilon/2 ) + \Pr(V_n(T)>\epsilon/2 )  < \delta 
\end{align*}  
for $n> n_1\vee n_2\vee n_3$, where $n_3$ is such that $\Pr( V_n(T)> \epsilon/2)<\delta/2$ for all $n>n_3$. 
Hence the numerator of (\ref{eqn:thetaHat}) is $o_p(1)$, 
 the denominator of (\ref{eqn:thetaHat})  is bounded from below by $c_1>0$ in probability, 
and so  $\sqrt{n} (\hat{\bs\theta} - \tilde{\bs\theta} ) = o_p(1)$
and in particular $\sqrt{n} (\hat{\bs\beta} - \tilde{\bs\beta} ) = o_p(1)$. By Slutsky's theorem and 
the efficiency of $\tilde{\bs\beta}$, the result follows. 
\end{proof} 

We now prove Lemmas \ref{lem:delta} and \ref{lem:rho}. To prove the former, we first obtain a bound on the 
derivatives of the log likelihood terms. In what follows, 
$\phi$ and $\Phi$ denote the standard normal density and distribution functions, and 
  $\psi  = \phi/\Phi  = (\log\Phi)'$
denotes the logarithmic derivative of $\Phi$. 
Two properties of $\psi$ are used repeatedly: The first is that 
\begin{equation}
\label{eqn:psiprime}
  \psi'(u) \ =\ (\log\Phi)''(u) \ \in\ (-1,0) \qquad \text{for all } u, 
\end{equation}
so that $\psi$ is decreasing and $|\psi'| \leq 1$. Writing $\psi' = -\psi\,(u+\psi)$, the two halves of 
\eqref{eqn:psiprime} are equivalent to the classical hazard-rate inequalities $\psi(u) > -u$ and 
$\psi(u)\{\psi(u)+u\} < 1$ for the standard normal distribution \citep{sampford_1953}. 
The second property follows from the first: $\psi$ is decreasing with 
$\psi(0) = \phi(0)/\Phi(0) = \sqrt{2/\pi}$, so $\psi(u) \leq \sqrt{2/\pi}$ for $u \geq 0$, while for $u<0$ 
\[
  \psi(u) \ =\ \psi(0) + \int_u^0 \{-\psi'(t)\}\, dt \ \leq \sqrt{2/\pi} + |u| , 
\]
and therefore 
\begin{equation}
\label{eqn:psibound}
  \psi(u) \ \leq \sqrt{2/\pi} + \max(0,-u) \ \leq 1 + |u| \qquad \text{for all } u. 
\end{equation}

\begin{lemma}\label{lem:deriv-bd}
For $m \in\{1,\ldots, 4\}$ there is a constant $C_m$ such that, for \emph{all} $\alpha \in \mathbb{R}$,
$\bs\beta \in \mathbb{R}^p$, and $\bl x \in \mathbb{R}^p$, and for every multi-index with $r$ derivatives in
$\alpha$ and one derivative in each of $\beta_{j_1},\dots,\beta_{j_s}$ with $r + s = m$,
\[
  \left|\frac{\partial^m}{\partial \alpha^{r}\,\partial\beta_{j_1}\cdots\partial\beta_{j_s}}
        \log\Phi(\alpha - \bl x^\top \bs\beta)\right|
  \ \leq C_m\,\|\bl x\|_s^{\,s}\,\bigl(1 + |\alpha - \bl x^\top \bs\beta|^{m}\bigr),
\]
and the same bound holds with $\log\Phi$ replaced by $\log\bigl(1 - \Phi(\alpha - \bl x^\top \bs\beta)\bigr)$.
\end{lemma}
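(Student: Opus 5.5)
The plan is to reduce everything to derivatives of a single function of one real variable, $g(u) = \log\Phi(u)$, evaluated at $u = \alpha - \bl x^\top\bs\beta$. Since $u$ is linear in $(\alpha,\bs\beta)$ with $\partial u/\partial\alpha = 1$ and $\partial u/\partial\beta_j = -x_j$, the chain rule gives exactly
\[
\frac{\partial^m}{\partial \alpha^{r}\,\partial\beta_{j_1}\cdots\partial\beta_{j_s}} \log\Phi(\alpha - \bl x^\top\bs\beta) = (-1)^s x_{j_1}\cdots x_{j_s}\, g^{(m)}(u).
\]
The prefactor is bounded by $\prod_k |x_{j_k}| \le \|\bl x\|^s$, which I read as the meaning of $\|\bl x\|_s^{\,s}$ in the statement (any norm works up to constants). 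It therefore suffices to show $|g^{(m)}(u)| \le C_m(1+|u|^m)$ for $m=1,\ldots,4$ and all real $u$.

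For this I use $g' = \psi$ and the identity $\psi' = -\psi(u+\psi)$ stated just before the lemma. By induction, each $g^{(m)}$ is a polynomial $P_m(u,\psi)$ of total degree at most $m$ in $(u,\psi)$:
\begin{itemize}
\item $g' = \psi$;
\item $g'' = -\psi u - \psi^2$;
\item differentiating a monomial $u^a\psi^b$ gives $a u^{a-1}\psi^b - b u^a \psi^{b-1}\psi(u+\psi)$, which raises the total degree by at most one.
\end{itemize}
Combining this with the bound \eqref{eqn:psibound}, $0<\psi(u)\le 1+|u|$, every monomial of degree $d\le m$ is bounded by $(1+|u|)^d \le 2^m(1+|u|^m)$. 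Summing the finitely many coefficients yields $C_m$.

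This crude polynomial bound is sufficient; the finer cancellations (for example $|\psi'|\le 1$) are not needed, although they could sharpen $C_m$.

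For $\log(1-\Phi(u))$, I use the symmetry $1-\Phi(u) = \Phi(-u)$, so $\log(1-\Phi(\alpha-\bl x^\top\bs\beta)) = g(-u)$. Its derivatives are $\pm g^{(m)}(-u)$ times the same product of $x_j$'s, and $|-u| = |u|$, so the same bound applies.

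The only step needing care is the degree-counting induction together with the growth of $\psi$. For $u\to+\infty$, $\psi$ is bounded and the polynomial is dominated by powers of $|u|$. For $u\to-\infty$, $\psi \approx |u|$ and the degree bound gives exactly the $|u|^m$ growth. Both regimes are covered uniformly by $\psi\le 1+|u|$, so I do not expect a genuine obstacle.
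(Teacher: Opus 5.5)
Your proposal is correct and follows essentially the same route as the paper: use the chain rule to reduce to $(-1)^s x_{j_1}\cdots x_{j_s}\, g^{(m)}(u)$, write $g^{(m)}$ as a polynomial in $(u,\psi)$ of total degree at most $m$ via $\psi' = -\psi(u+\psi)$, bound each monomial using $0<\psi(u)\le 1+|u|$, and handle $\log(1-\Phi)$ by the symmetry $1-\Phi(u)=\Phi(-u)$. The differences are cosmetic: the paper writes out $g^{(1)},\ldots,g^{(4)}$ explicitly and bounds $|u|^a\psi^b$ by $c(1+|u|^{m-b})(1+|u|^b)$, while you obtain the degree bound by induction and use $(1+|u|)^d \le 2^m(1+|u|^m)$, an argument that also works for any $m$.
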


\begin{proof}
For a given multi-index each $\alpha$-derivative contributes a
factor of $1$ and each $\beta_{j}$-derivative contributes a factor of $-x_{j}$, so
\[
  \frac{\partial^m}{\partial \alpha^{r}\,\partial\beta_{j_1}\cdots\partial\beta_{j_s}}
    \log\Phi(\alpha - \bl x^\top \bs\beta)
  = (-1)^{s}\Bigl(\textstyle\prod_{i=1}^{s} x_{j_i}\Bigr)\, g^{(m)}(u),
\]  
where $g = \log\Phi$ and $u = \alpha - \bl x^\top\bs\beta$.
Since $\bigl|\prod_{i=1}^{s} x_{j_i}\bigr| \leq \|\bl x\|_s^{\,s}$, it suffices to show that the first four
derivatives of $g$ satisfy $|g^{(m)}(u)| \leq c (1 + |u|^{m})$ for all $u$, where here and in what follows 
$c$ is a generic constant that depends on $m$.  The same bound for
$\log(1-\Phi)$ follows by the symmetry $1 - \Phi(u) = \Phi(-u)$.

From \eqref{eqn:psibound} we have, for each $m \geq 1$, 
\begin{equation}
\label{eqn:psi-power}
  \psi(u)^{m} \ \leq (1 + |u|)^{m} \ \leq c\,(1 + |u|^{m}), 
\end{equation}
the last step holding because $(1+|u|)^m = \sum_{j=0}^m \binom{m}{j}|u|^{j}$ and $|u|^{j} \leq 1 + |u|^{m}$ for 
$0 \leq j \leq m$. Differentiating $g' = \psi$ and using $\psi' = -\psi\,(u + \psi)$ repeatedly,
\begin{align*}
  g^{(1)}(u) &= \psi, \\
  g^{(2)}(u) &= -\psi\,(u + \psi), \\
  g^{(3)}(u) &= \psi\,\bigl(u^{2} - 1 + 3u\psi + 2\psi^{2}\bigr), \\
  g^{(4)}(u) &= \psi\,\bigl(-u^{3} + 3u - 6u^{2}\psi + 4\psi - 12u\psi^{2} - 6\psi^{3}\bigr),
\end{align*}
where $\psi = \psi(u)$. Each $g^{(m)}$ is thus a sum of terms of the form $c\,u^{a}\psi^{b}$ with
$a + b \leq m$ and $b \geq 1$. For such a term, $|u|^{a} \leq 1 + |u|^{a} \leq 1 + |u|^{m-b}$, while
\eqref{eqn:psi-power} gives $\psi^{b} \leq  c (1 + |u|^{b})$. Multiplying these terms gives
\[
  |u|^{a}\,\psi(u)^{b} \leq c(1 + |u|^{m-b})(1 + |u|^{b}) \leq c(1 + |u|^{m}),
\]
since $(1 + |u|^{m-b})(1 + |u|^{b}) = 1 + |u|^{m-b} + |u|^{b} + |u|^{m} \leq 4\,(1 + |u|^{m})$. Summing the
finitely many terms of $g^{(m)}$ yields $|g^{(m)}(u)| \leq c(1 + |u|^{m})$ for $m = 1,2,3,4$, which
completes the proof.
\end{proof}

\begin{proof}[Proof of Lemma \ref{lem:delta}]
For the first result, 
 write $\eta_i(\bs\theta) = \alpha - \bl x_i^\top\bs\beta = \bl u_i^\top\bs\theta$ with
$\bl u_i = (1, -\bl x_i^\top)^\top$ and let $Q_n = -\ddot l(\tilde{\bs\theta})/n$. 
Since $\dot l(\tilde{\bs\theta}) = \bl 0$, a third-order Taylor expansion gives 
  $\Delta(\bl s) = - \bl s^\top Q_n \bl s/2 + R(\bl s)$ where the remainder term can be written
\[  R(\bl s) = \frac{1}{6n^{3/2}}\sum_{t,u,v} s_t s_u s_v\, \dddot l^{(t,u,v)}(\bs\theta_*),
\]
with $\bs\theta_* = \tilde{\bs\theta} + \zeta\bl s/\sqrt n$ for some $\zeta \in (0,1)$.
Each third derivative $\dddot l^{(t,u,v)}$ is a sum over $i$ of third derivatives of
$\log\Phi(\eta_i(\bs\theta))$ or $\log(1-\Phi(\eta_i(\bs\theta)))$ in the components of $\bs\theta$. Applying
Lemma~\ref{lem:deriv-bd} with $m = 3$ to each summand, we have
\[
  \big|\dddot l^{(t,u,v)}(\bs\theta_*)\big|
  \ \leq c \sum_{i=1}^n \|\bl x_i\|^{3}\bigl(1 + |\eta_i(\bs\theta_*)|^{3}\bigr),
\]
since $\|\bl x_i\|_s^{\,s} \leq \|\bl x_i\|^{3}$ for $s \leq 3$. To control this uniformly over
$\bl s \in \mathcal S_T$, note that for such $\bl s$,
\[
  |\eta_i(\bs\theta_*)|
  \ \leq |\eta_i(\tilde{\bs\theta})| + \zeta|\bl u_i^\top \bl s|/\sqrt{n}
  \ \leq (1 + \|\bl x_i\|)\Bigl(\|\tilde{\bs\theta}\| + T/\sqrt n\Bigr),
\]
using $\eta_i(\bs\theta) = \bl u_i^\top\bs\theta$, $\|\bl u_i\| \leq 1 + \|\bl x_i\|$ and $\|\bl s\| \leq T$. Hence
for $n \geq T^2$,
\[
  \frac1n\big|\dddot l^{(t,u,v)}(\bs\theta_*)\big|
  \ \leq d \,\bigl(1 + \|\tilde{\bs\theta}\|^{3}\bigr)\cdot\frac1n\sum_{i=1}^n\bigl(\|\bl x_i\|^{3} + \|\bl x_i\|^{6}\bigr),
\]
and by the law of large numbers the average converges to $\Exp{\|\bl x\|^{3} + \|\bl x\|^{6}} < \infty$, while
$\|\tilde{\bs\theta}\| = O_p(1)$. The right-hand side is therefore bounded by a random variable
$B_n = O_p(1)$ not depending on $\bl s$. Since $|s_t s_u s_v| \leq T^3$ on $\mathcal S_T$ and there are
$(p+1)^3$ index triples, $\sup_{\mathcal S_T}|R(\bl s)| \leq B_n T^3/\sqrt n \stackrel{p}{\to} 0$.
Finally, $\sup_{\mathcal S_T}|\bl s^\top(Q_n - I(\bs\theta_0))\bl s/2| \leq T^2\|Q_n - I(\bs\theta_0)\|/2
\stackrel{p}{\to} 0$, and the two bounds combine to give the first result of the lemma. 

For the second result,  we control $\Delta(\bl s)$ on $||\bl s||\geq T$ using the 
concavity of $\Delta(\bl s)$. Fix $\|\bl s\| \geq T$ and put $\bl v = T\bl s/\|\bl s\|$,
a point on the sphere of radius $T$. Since $\bl v$ is a convex combination of $\bl s$ and $\bl 0$ and
$\Delta(\bl 0) = 0$, concavity gives $\Delta(\bl v) \geq (T/\|\bl s\|)\Delta(\bl s)$, or  equivalently  
\begin{equation}
\label{eqn:cone}
  \Delta(\bl s) \leq \frac{\|\bl s\|}{T}\times \sup_{\|\bl v\| = T}\Delta(\bl v)
\end{equation}
for all $\|\bl s \| \geq T$. 
This inequality is deterministic. It remains to bound the supremum on the right.
By the first item of the lemma, the events
$A_n = \{\sup_{\|\bl v\| \leq T}|\Delta(\bl v) +  \bl v^\top I(\bs\theta_0)\bl v/2| \leq 3\lambda_0 T^2/8\}$
have probability converging to one. On $A_n$, since $\bl v^\top I(\bs\theta_0)\bl v \geq \lambda_0 T^2$ when
$\|\bl v\| = T$,
\[
  \sup_{\|\bl v\| = T}\Delta(\bl v)
  \leq -\lambda_0 T^2/2 + 3\lambda_0 T^2/8 = -\lambda_0 T^2/8 ,
\]
and substituting into \eqref{eqn:cone} gives $\Delta(\bl s) \leq -\lambda_0 T\|\bl s\|/8$ for all
$\|\bl s\| \geq T$ on $A_n$, proving the claim.
\end{proof}

\begin{proof}[Proof of Lemma \ref{lem:rho}]
Write $\rho(\bl s) =  ( \pi_\beta( \tilde{\bs\beta} + \bl b /\sqrt{n} )/\pi_\beta(\tilde{\bs\beta}) ) \times 
                       (  w(\tilde{\bs\theta} + \bl s /\sqrt{n} )/w(\tilde{\bs\theta}) )  
                        \equiv \rho_\pi (\bl s) \times  \rho_w(\bl s)$ 
where $\pi_\beta$ is the $N_p(\bs\mu,\Sigma)$ density and $\bl b$ is the $\beta$-block of $\bl s$. 
Then
\begin{equation}
\label{eqn:rho-pi}
  \log\rho_\pi(\bl{s})
  = -\frac{(\tilde{\bs{\beta}} - \bs{\mu})^\top\Sigma^{-1}\bl{b}}{\sqrt n}
    - \frac{\bl{b}^\top\Sigma^{-1}\bl{b}}{2n} .
\end{equation}
Setting $\varepsilon_n = \|\Sigma^{-1}(\tilde{\bs{\beta}} - \bs{\mu})\|/\sqrt n = O_p(n^{-1/2})$ and using $\|\bl{b}\| \leq \|\bl{s}\|$
together with $\bl{b}^\top\Sigma^{-1}\bl{b} \geq 0$ gives the two bounds
\begin{equation}
\label{eqn:rho-pi-bd}
  |\log\rho_\pi(\bl{s})| \leq \varepsilon_n\|\bl{s}\| + \|\bl{s}\|^2/[2 \gamma_0 n]
  \qquad\text{and}\qquad
  \rho_\pi(\bl{s}) \leq e^{\varepsilon_n\|\bl{s}\|} .
\end{equation}
where $\gamma_0$ is the smallest eigenvalue of $\Sigma$.  

Next, writing $\eta_k(\bs\theta) = \bl u_k^\top \bs\theta$, we have 
$w(\bs\theta) = \sum_{k\in N_0} \psi(\eta_k(\bs\theta))$. 
Let $K$ and $K_{1/2}$  be the closed balls of radius $r$ and $r/2$  centered at the true parameter value 
$\bs\theta_0$,
for which $\Pr( \tilde{\bs\theta} \in K)\rightarrow 1$ (and similarly for $K_{1/2}$). The law of large numbers 
gives $n^{-1}\sum_{k\in N_0}\psi(\eta_k(\bs\theta)) \to \bar w(\bs\theta)$ uniformly on the compact set $K$, where
\[
  \bar w(\bs\theta) = \Exp{ \Phi(\alpha_0 - \bl x^\top\bs\beta_0)\,\psi(\bl u^\top\bs\theta)} > 0. 
\]
Being continuous and strictly
positive on the compact set $K$, $\bar w$ satisfies 
$\inf_K \bar w = \min_K \bar w > 0$ deterministically. Now let $D_n$ be the event that 
$w(\bs \theta )$ is close to $\bar w(\bs\theta)$ relative to this lower bound: 
\begin{equation}
\label{eqn:Dn}
  D_n = \Bigl\{ \sup_{\bs\theta \in K}\bigl| n^{-1}w(\bs\theta) - \bar w(\bs\theta)\bigr| \leq \tfrac12\inf_K \bar w \Bigr\}. 
\end{equation}
Note that $\Pr(D_n) \rightarrow 1$.

We use the events $\{D_n\}$ and the mean value theorem to obtain the first result of the lemma. 
The gradient of $\log w(\bs\theta)$ is 
\begin{equation}
\label{eqn:wgrad}
  \nabla \log w(\bs\theta)
  = \frac{\sum_{k\in N_0}\psi'(\eta_k(\bs\theta))\,\bl u_k/n}{w(\bs\theta)/n} .
\end{equation}
The numerator is an average of at most $n$ terms, each having norm at most $\|\bl u_k\| \leq 1 + \|\bl x_k\|$ because
$|\psi'| \leq 1$ by \eqref{eqn:psiprime}.
On $D_n$, for every $\bs\theta \in K$ the denominator of \eqref{eqn:wgrad} satisfies
$w(\bs\theta) \geq n(\inf_K \bar w - \tfrac12\inf_K\bar w) = n\inf_K\bar w/2$, so that
\begin{equation}
\label{eqn:wgrad-bd}
  \sup_{\bs\theta \in K}\|\nabla\log w(\bs\theta)\|
  \leq \frac{(1 + \sum_i \|\bl x_i\|/n)}{\inf_K \bar w/2}
  = O_p(1) .
\end{equation}
Note that \eqref{eqn:wgrad-bd} controls $\nabla\log w$ uniformly over $\bs\theta \in K$.  
To apply it to the 
segment from $\tilde{\bs\theta}$ to $\tilde{\bs\theta} + \bl s/\sqrt n$, we need 
$\tilde{\bs\theta} + \bl s/\sqrt n$ to be in $K$. 
Accordingly, define
\[
  G_n = D_n \cap \{\tilde{\bs\theta} \in K_{1/2}\},
\]
the event on which both the gradient bound \eqref{eqn:wgrad-bd} holds and $\tilde{\bs\theta}$ lies in $K_{1/2}$.
Since $\Pr(D_n) \to 1$ and $\Pr(\tilde{\bs\theta}\in K_{1/2}) \to 1$ we also have $\Pr(G_n) \to 1$.
For $\sqrt n > 2T/r$ and
any $\bl{s}$ with $\|\bl{s}\| \leq T$ and any $\zeta \in [0,1]$,
\[
  \|\tilde{\bs{\theta}} + \zeta\bl{s}/\sqrt n - \bs{\theta}_0\|
  \ \leq \|\tilde{\bs{\theta}} - \bs{\theta}_0\| + \frac{\|\bl{s}\|}{\sqrt n}
  \ \leq \frac r2 + \frac{T}{\sqrt n}
  \ \leq \frac r2 + \frac r2 = r ,
\]
using $\tilde{\bs{\theta}} \in K_{1/2}$ and $T/\sqrt n \leq r/2$. Hence the segment from $\tilde{\bs{\theta}}$
to $\tilde{\bs{\theta}} + \bl{s}/\sqrt n$ lies in $K$, and the mean value theorem together with
\eqref{eqn:wgrad-bd} gives
\[
  \sup_{\|\bl{s}\| \leq T}|\log\rho_w(\bl{s})|
  \leq \frac{T}{\sqrt n}\sup_{\bs{\theta}\in K}\|\nabla\log w(\bs{\theta})\|
  = O_p(T/\sqrt n) .
\]
Combined with the first bound in \eqref{eqn:rho-pi-bd}, which is also $O_p(T/\sqrt n)$ on $\|\bl{s}\| \leq T$,
\[
  \sup_{\|\bl{s}\| \leq T}|\log\rho(\bl{s})|
  \leq \sup_{\|\bl{s}\| \leq T}|\log\rho_\pi(\bl{s})| + \sup_{\|\bl{s}\| \leq T}|\log\rho_w(\bl{s})|
  = O_p(T/\sqrt n) .
\]
Since $\log\rho \stackrel{p}{\rightarrow}0$ uniformly on $\mathcal S_T$ implies $\rho \stackrel{p}{\rightarrow} 1$ on   $\mathcal S_T$, the first result follows.

Finally, to show the second result of the lemma, \eqref{eqn:psibound} gives $\psi(t) \leq 1 + |t|$, so 
\begin{align*} 
  w(\bs\theta)/n  & \leq \sum_{k\in N_0}(1 + \|\bl u_k\|\,\|\bs\theta\|)/n \\
  &   \leq (1 + \|\bs\theta\|)\sum_{k\in N_0}(1 + \|\bl u_k\|)/n \equiv (1+\|\bs\theta\|)\, a_n 
\end{align*}
for every $\bs{\theta}$, where $a_n =O_p(1)$. 
On $G_n$ the point
$\tilde{\bs{\theta}}$ lies in $K_{1/2} \subset K$, so using the lower bound \eqref{eqn:Dn} we have
\[
  \rho_w(\bl{s})
  = \frac{w(\tilde{\bs{\theta}} + \bl{s}/\sqrt n)}{w(\tilde{\bs{\theta}})}
  \leq \frac{a_n(1 + \|\tilde{\bs{\theta}}\| + \|\bl{s}\|/\sqrt n)}{\inf_K\bar w/2}
  \leq C_n(1 + \|\bl{s}\|),
\]
where, using $\|\tilde{\bs\theta}\| = O_p(1)$ and $1 + \|\tilde{\bs{\theta}}\| + \|\bl{s}\|/\sqrt n \leq (1 + \|\tilde{\bs{\theta}}\|)(1 + \|\bl{s}\|)$, 
 $C_n = 2 a_n(1 + \|\tilde{\bs{\theta}}\|)/\inf_K\bar w = O_p(1)$. 
Multiplying by $\rho_\pi(\bl s)$ 
 and applying the second bound in \eqref{eqn:rho-pi-bd}, we have
\[
  \rho(\bl{s}) = \rho_\pi(\bl{s})\,\rho_w(\bl{s}) \leq C_n(1 + \|\bl{s}\|)e^{\varepsilon_n\|\bl{s}\|}
\]
on $G_n$, which proves the second result. 

\end{proof}

\bibliographystyle{plainnat}
\bibliography{refs} 

\end{document}